\documentclass{revtex4}
\usepackage{amsmath,amsthm}
\usepackage{setspace}
\textheight 21.0 cm
\topmargin 0 cm
\textwidth 16 cm
\oddsidemargin 0.0 cm 
\linespread{1.8}
\usepackage{graphicx}
\newcommand{\bra}[1]{\langle#1|}
\newcommand{\ket}[1]{|#1\rangle}
\usepackage{amsfonts}
\newtheorem{theorem}{Theorem}[section]

\newtheorem{proposition}[theorem]{Proposition}

\theoremstyle{remark}
\newtheorem{remark}[theorem]{Remark}

\theoremstyle{definition}

\theoremstyle{example}
\newtheorem{example}[theorem]{Example}

\theoremstyle{notation}

\begin{document}

\title{Analytic representations with Theta functions for systems on ${\mathbb Z}(d)$ and on ${\mathbb S}$ }
\author{P. Evangelides, C. Lei, A. Vourdas\\
Department of Computing\\University of Bradford\\ Bradford BD7 1DP, UK}
\begin{abstract}
An analytic representation with Theta functions on a torus, for systems with variables in ${\mathbb Z}(d)$, is considered.
Another analytic representation with Theta functions on a strip, for systems with positions in a circle ${\mathbb S}$ and momenta in ${\mathbb Z}$, is also considered.
The reproducing kernel formalism for these two systems is studied.
Wigner and Weyl functions in this language, are also studied.
\end {abstract}
\maketitle
\section{Introduction}
Analytic representations in quantum mechanics, represent the quantum states with analytic functions.
Then the powerful theory of analytic functions can be used in a quantum mechanical context.
Various analytic representations have been studied in the literature: the Bargmann representation in the complex plane
for the harmonic oscillator\cite{B}, analytic representations in the unit disc for systems with $SU(1,1)$ symmetry,
analytic representations in the extended complex plane for systems with $SU(2)$ symmetry, etc
(for reviews see \cite{P,B1,H,V}).

Refs.\cite{ZV,TVZ,CGV} have used an analytic representation based on theta functions\cite{T1,T2,T3} on a torus, for systems with 
variables in ${\mathbb Z}(d)$ (the ring of integers modulo $d$).
The $d$ zeros of the analytic function were used to describe the time evolution of these systems in terms of $d$ paths in the torus.
A related representation has also been used in \cite{L} in studies of chaos.
Work on other aspects of systems with finite Hilbert space have been reviewed in \cite{R1,R2,R3,R4,R5,R6,R7}.
It is known that there are differences in the formalism in the cases that $d$ is an odd or even number, and here
we consider the case of odd $d$. In this case the inverse of $2$ in ${\mathbb Z}(d)$ exists (if $d=2j+1$ then $2^{-1}=j+1$), and it 
enters in many of the formulas below.

In this paper we study this analytic representation on a torus, as a subject in its own right analogous to the Bargmann formalism.
In particular we develop a reproducing kernel formalism in this context.
We also introduce an analogous formalism on a strip for quantum systems on a circle.
Quantum mechanics on a circle, has been studied for a long time \cite{c1,c2,c3,c4,c5,c6,c7,c8}, and
coherent states on a circle have been considered in \cite{co1,co2,co3,co4}.
Our approach complements this work, using an analytic language with Theta functions.
For simplicity, we consider periodic boundary conditions (zero Aharonov-Bohm magnetic flux).

In section II, we introduce briefly the basic formalism for both finite systems with variables in ${\mathbb Z}(n)$, and 
also systems with positions on  circle ${\mathbb S}$ and momenta in ${\mathbb Z}$,
in order to define the notation. 
In section III we introduce an analytic representation on a torus for systems with variables in ${\mathbb Z}(n)$, using Theta functions.
We also introduce an analytic representation on a strip for systems on a circle, using Theta functions.
In section IV we study the reproducing kernel formalism for finite systems.
Analogous results for systems on a circle, are presented in section V.
Proposition \ref {pro1} for finite systems and \ref{pro10} for the circle, are the main results of this paper.   
In section VI we study Wigner and Weyl functions in this language.
We conclude in section VII with a discussion of our results.

\section{Preliminaries}

\subsection{Finite quantum systems}

We consider a finite quantum system with variables in ${\mathbb Z}(d)$, described with 
Hilbert space ${\cal H}(d)$ of dimension $d$, which we assume to be an odd integer.
Let $\ket{X; m}$ where $m\in {\mathbb Z}(d)$, be a basis  which we call position states.
With a finite Fourier transform we get the momentum basis
\begin{eqnarray}\label{Fou}
\ket{P;n}={\cal F}|{X};n\rangle ;\;\;\;\;\;\;{\cal F}=d^{-1/2}\sum _{m,n}\omega (mn)\ket{X;m}\bra{X;n}
;\;\;\;\;\;\omega (m)=\exp \left [i \frac{2\pi m}{d}\right ]
\end{eqnarray}

The position-momentum phase space in this case is the toroidal
lattice ${\mathbb Z}(d)\times {\mathbb Z}(d)$. Displacement operators in this phase space are defined as
\begin{eqnarray}\label{99}
&&{\cal D}(\alpha,\beta)={\cal Z}^\alpha {\cal X}^\beta \omega (-2^{-1}\alpha \beta );\;\;\;\;\;\alpha, \beta \in {\mathbb Z}(d)\nonumber\\
&&{\cal Z}=\sum _{n}\omega (n)|{X};n\rangle \langle {X};n|=\sum _n\ket{{P};n+1}\bra{{P};n}\nonumber\\
&&{\cal X}=\sum _{n}\omega (-n)|{P};n\rangle \langle {P};n|=\sum _n \ket{{X};n+1}\bra{{ X};n}
\end{eqnarray}
They obey the relations
\begin{eqnarray}\label{58}
{\cal X}^{d}&=&{\cal Z}^{d}={\bf 1};\;\;\;\;\;\;
{\cal X}^\beta {\cal Z}^\alpha = {\cal Z}^\alpha {\cal X}^\beta \omega (-\alpha \beta);\;\;\;\;\;\;
\end{eqnarray}
We also define the displaced Fourier operator and the displaced parity operator:
\begin{eqnarray}\label{4}
{\cal F}(\alpha,\beta)&=&{\cal D}(\alpha,\beta){\cal F}{\cal D}(-\alpha,-\beta)
=\omega [2^{-1}(\alpha ^2+\beta ^2)]{\cal F}{\cal D}(-\alpha -\beta,\alpha -\beta)\nonumber\\&=&
\omega [2^{-1}(\alpha ^2+\beta ^2)]{\cal D}(\alpha -\beta,\alpha +\beta){\cal F}
\nonumber\\
{\cal P}(\alpha,\beta)&=&{\cal D}(\alpha,\beta){\cal F}^2{\cal D}(-\alpha,-\beta)={\cal D}(2\alpha,2\beta){\cal P}(0,0)
\end{eqnarray}
and we can show that
\begin{eqnarray}\label{45}
{\cal P}(\gamma, \delta)=\frac{1}{d} \sum _{\alpha, \beta}\omega (\beta \gamma -\alpha \delta){\cal D}(\alpha,\beta)
\end{eqnarray}

\subsection{Quantum systems on a circle}
A particle on a circle ${\mathbb S}$ with radius $1$, is described with the wavefunction $q(x)$ where
\begin{equation}\label{Bloch}
|q\rangle=\frac{1}{2\pi}\int_0^{2\pi}dx\;q(x)|x\rangle;\;\;\; q(x)=\bra{x} q\rangle;\;\;\;\;
q(x+2\pi )=q(x);\;\;\;\;\frac {1}{2\pi }\int_0^{2\pi } |q(x)|^2 dx =1,
\end{equation}
With a Fourier expansion, we get
\begin{equation}\label{Rx}
q(x)=\sum_{N=-\infty}^{\infty} q_N \exp (i N x);\;\;\;q_N=\frac{1}{2\pi}\int _0^{2\pi}q(x) \exp (-iNx)dx.
\end{equation}
Let $|x \rangle$, $|N \rangle$ be position and momentum eigenstates. Then:
  \begin{eqnarray}
   |N \rangle &=&\frac{1}{2\pi }\int _0 ^{2\pi }dx\exp (iN x)|x \rangle \nonumber\\
\bra{x}y\rangle&=&2\pi\delta (x-y)\nonumber\\
\frac{1}{2\pi}\int_0^{2\pi}\ket{x}\bra{x}dx&=&\sum _{N=-\infty}^{\infty}\ket {N}\bra{N}={\bf 1}
  \end{eqnarray}
where $\delta (x-y)$ is the `comb delta function' with period $2\pi$ ($x,y \in {\mathbb R}/2\pi {\mathbb Z}$). It satisfies the relation
 \begin{eqnarray}
  \sum _{N=-\infty}^{\infty}\exp [iN(x-y)]=2\pi \delta(x-y)
   \end{eqnarray}
The scalar product is given by
 \begin{eqnarray}
\bra{q_2}q_1\rangle=\frac{1}{2\pi}\int _0^{2\pi}[q_2(x)]^*q_1(x)dx
   \end{eqnarray}

Displacement operators in the ${\mathbb S}\times {\mathbb Z}$ phase space are given by\cite{c8}
  \begin{eqnarray}\label{pa2}
   D(a, K)\ket{x}&=& \exp \left [{iK} \left (x+\frac{a}{2} \right ) \right ]|x+a \rangle;\;\;\;\;\;
K\in{\mathbb Z}\nonumber\\
   D(a, K)\ket{N}&=&\exp\left [-ia \left (N+\frac{K}{2}\right )\right ]|{N+K} \rangle\nonumber\\
   D(a, K)D(b, M)&=&D(a+b, K+M) \exp \left[\frac{i}{2}( Kb -Ma) \right]\nonumber\\
   D^\dagger(a,K)&=&D(-a,-K)
  \end{eqnarray}
$D(a,K)$ is periodic in $a$, with period
is $2\pi $ if $K$ is even and $4\pi $ if $K$ is odd:
 \begin{eqnarray}\label{pa100}
   D(a +2\pi, K)&=& (-1)^KD(a , K)
  \end{eqnarray}

 We also define the parity operator as:
 \begin{eqnarray}
    U_0 = \frac{1}{2\pi } \int^{2\pi }_{0} |x\rangle \langle -x| 
          dx = \sum_{N=-\infty}^{\infty} |{-N}\rangle \langle N|=\frac{1}{2\pi}\sum_{K=-\infty}^\infty \int_0^{2\pi}da\;D(a,2K).  \label{U0}
   \end{eqnarray}
We note that only the $D(a,2K)$, with even displacements $2K$ in the momentum direction, appear in the right hand side.

We consider the displaced parity operator 
\begin{eqnarray}\label{pioiu}
{U}(a,K)&=&D(a,K){U}_0=U_0D(-a,-K)\nonumber\\
{U}(a+2\pi,K)&=&(-1)^K{U}(a,K);\;\;\;\;[U(a,K)]^2={\bf 1}
\end{eqnarray}
For even and odd $K$ we get
\begin{eqnarray}
&&{U}(a,2K)=D\left (\frac{a}{2},K\right )U_0D\left (-\frac{a}{2},-K\right )\nonumber\\
&&{U}(a,2K+1)=D\left (\frac{a}{2},K\right )U_0D\left (-\frac{a}{2},-K-1\right )\exp\left (\frac{ia}{4}\right ).
\end{eqnarray}
There is an asymmetry in the even and odd cases in these formulas.

It is related to the displacement operator, through the Fourier transform:
\begin{eqnarray}\label{fou}
{U}(a,K)=\frac{1}{2\pi}\sum_{M=-\infty}^\infty\int_0^{2\pi}db\;D(b,K+2M)\exp\left [\frac{i}{2}(Kb-aK-2Ma)\right ].
\end{eqnarray}

\section{Analytic representations}
\subsection{Analytic representations on a torus for finite quantum systems }

Let $\ket {g}$ be an arbitrary pure normalized state
\begin{eqnarray}\label{6}
&&\ket {g}=\sum _m g_m\ket {X;m}=\sum _m{\widetilde g}_m\ket {P;m};\;\;\;\;\;\;\sum _m |g_m|^2=1\nonumber\\
&&{\widetilde g}_m=d^{-1/2}\sum _n\omega (-mn)g_n
\end{eqnarray}
We use the notation
\begin{eqnarray}
\ket {g^*}=\sum _m g_m^*\ket {X;m};\;\;\;\;\;\;
\bra {g}=\sum _m g_m^*\bra {X;m};\;\;\;\;\;\;\bra {g^*}=\sum _m g_m\bra {X;m}
\end{eqnarray}
In ref\cite{ZV} we represented the state $\ket{g}$ of Eq.(\ref{6}), with the function
\begin{eqnarray}\label{aaa1}
G(z)=\pi^{-1/4} \sum_{m=0}^{d-1} g_m\;\Theta_3 \left [\frac{\pi m}{d}-z\sqrt{\frac{\pi}{2d}};\frac{i}{d}\right ]
\end{eqnarray}
where $\Theta_3$ is Theta function defined as
\begin{eqnarray}\label{pa4}
\Theta _3(u,\tau)=\sum_{n=-\infty}^{\infty}\exp(i\pi \tau n^2+i2nu).
\end{eqnarray}
It is known that
\begin{eqnarray}\label{100}
\Theta _3(u,\tau)=(-i\tau)^{-1/2}\exp \left(\frac{u^2}{i\pi \tau}\right)\Theta _3\left (\frac{u}{\tau},-\frac{1}{\tau}\right)
\end{eqnarray}
$G(z)$ is an analytic function. 
The scalar product is given by 
\begin{eqnarray}\label{scalar}
 \langle g_1^\ast| g_2 \rangle &=& \frac{1}{d^{3/2}\sqrt{2\pi} }\int_S  d\mu (z)G_1(z) G_2(z^\ast)
\end{eqnarray}
$z_R$ and $z_I$ are the real and imaginary parts of $z$, correspondingly.
These relations are proved using
the orthogonality relation\cite{TVZ}
\begin{eqnarray}
 2^{-1/2}\pi ^{-1}d^{-3/2}\int_S d\mu (z)
\Theta_3 \left [\frac{\pi n}{d}-z\sqrt{\frac{\pi}{2d}};\frac{i}{d}\right ]\;
\Theta_3 \left [\frac{\pi m}{d}-z^*\sqrt{\frac{\pi}{2d}};\frac{i}{d}\right ]
=\delta (m,n)
\end{eqnarray}

Using the properties of Theta functions we prove that
\begin{eqnarray} \label{periodicity}
 &&G( z+\sqrt{2\pi d} ) = G(z)\nonumber\\
 &&G( z+ i \sqrt{2\pi d})  = G(z)\exp \left (\pi d-i z\sqrt{2\pi d} \right ).
\end{eqnarray}
It is seen that $G(z)$ is defined on a cell $S=[M\sqrt{2\pi d},(M+1)\sqrt{2\pi d})\times [N\sqrt{2\pi d},(N+1)\sqrt{2\pi d})$ where $(M,N)$ are integers labelling the cell.
We call ${\cal A}$ the space of these functions.

The coefficients $g_m$, ${\widetilde g}_m$ in Eq.(\ref{6}) are given by 
\begin{eqnarray}\label{5b}
g_m&=& 2^{-1/2}\pi ^{-3/4}d^{-3/2}\int_S d\mu (z)
\Theta_3 \left [\frac{\pi m}{d}-z\sqrt{\frac{\pi}{2d}};\frac{i}{d}\right ]\;G(z^*);\;\;\;\;\;d\mu (z)=d^2z \exp \left( - z_I^2 \right)\nonumber\\
{\widetilde g}_m&=& 2^{-1/2}\pi ^{-3/4}d^{-3/2}d^{-1/2}\sum _n\omega (-mn)\int_S d\mu (z)
\Theta_3 \left [\frac{\pi n}{d}-z\sqrt{\frac{\pi}{2d}};\frac{i}{d}\right ]\;G(z^*).
\end{eqnarray}
\begin{example}
The momentum states $\ket{P;k}$ are represented with the function
\begin{eqnarray}\label{aaa1}
G(z;k)=\pi ^{-1/4}\frac{1}{\sqrt{d}}\sum _m\omega (km)\Theta_3 \left [\frac{\pi m}{d}-z\sqrt{\frac{\pi}{2d}};\frac{i}{d}\right ]=\pi ^{-1/4}
\exp \left ( -\frac{z^2}{2}\right )\Theta_3 \left [\frac{\pi k}{d}-iz\sqrt{\frac{\pi}{2d}};\frac{i}{d}\right ]
\end{eqnarray}
In order to prove this we use the
\begin{eqnarray}
&&\frac{1}{\sqrt{d}}\sum _m\omega (km)\Theta_3 \left [\frac{\pi m}{d}-z\sqrt{\frac{\pi}{2d}};\frac{i}{d}\right ]\nonumber\\
&&=\frac{1}{\sqrt{d}}\sum _{n=-\infty}^{\infty}\exp \left (-\frac{\pi n^2}{d}-izn\sqrt{\frac{2\pi}{d}}\right)\sum _{m=0}^{d-1}\omega [m(n+k)]\nonumber\\
&&=\sqrt{d}\sum _{n=-k+dN}\exp \left (-\frac{\pi n^2}{d}-izn\sqrt{\frac{2\pi}{d}}\right)\nonumber\\
&&=\sqrt{d}\exp\left (-\frac{\pi k^2}{d}+ikz\sqrt{\frac{2\pi}{d}}\right )\Theta_3 \left [-i\pi k-z\sqrt{\frac{\pi d}{2}};{i}{d}\right ] 
\end{eqnarray}
Using Eq.(\ref{100}) we prove Eq.(\ref{aaa1}).
\end{example}
\begin{proposition}
\begin{itemize}
\mbox{}
\item[(1)]
The analytic function $G(z)$ has exactly $d$ zeros $\zeta _{\nu}$ in each cell $S=[M\sqrt{2\pi d},(M+1)\sqrt{2\pi d})\times [N\sqrt{2\pi d},(N+1)\sqrt{2\pi d})$
which obey the constraint
\begin{eqnarray}\label{con}
\sum _{\nu =1}^d \zeta _{\nu}=\sqrt{2\pi d}(M+iN)+d^{3/2}\sqrt{\frac{\pi }{2}}(1+i)
\end{eqnarray} 
\item[(2)]
If the $d$ zeros $\zeta _{\nu}$ are given (and obey the above constraint),
then the function $G(z)$ is given by
\begin{eqnarray}\label{500}
G(z) &=& C\exp \left[ -i\sqrt{\frac{2\pi}{d}} Nz\right]
\prod_{n=1}^{d}  \Theta_3 \left[w_n(z);\; i\right]\nonumber\\
w_n(z)&=&\sqrt{\frac{\pi}{2d}}(z-\zeta _n)+\frac{\pi (1+i)}{2}
\end{eqnarray}
Here $N$ is the integer in the constraint of Eq.(\ref{con}), 
and $C$ is a constant determined by the normalization condition.
\end{itemize}
\end{proposition}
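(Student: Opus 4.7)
My plan is to establish part (1) by an argument-principle calculation on the boundary of the cell $S$, and then establish part (2) by producing a candidate analytic function with the correct zeros and the correct quasi-periodicity, dividing into $G(z)$, and invoking Liouville.

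For part (1), let $L=\sqrt{2\pi d}$. Using the quasi-periodicity relations of Eq.(\ref{periodicity}), I first differentiate logarithmically to get
\begin{equation*}
\frac{G'(z+L)}{G(z+L)} = \frac{G'(z)}{G(z)}, \qquad \frac{G'(z+iL)}{G(z+iL)} = \frac{G'(z)}{G(z)} - iL .
\end{equation*}
Applied to the rectangular contour $\partial S$, the left/right edges cancel, while the top/bottom edges contribute $\int_{ML}^{(M+1)L}(iL)\,dx = iL^{2}=2\pi i d$. Dividing by $2\pi i$ gives exactly $d$ zeros in $S$, proving the counting statement. For the constraint, I evaluate $\sum_{\nu}\zeta_{\nu}=\frac{1}{2\pi i}\oint_{\partial S} z\,\frac{G'(z)}{G(z)}\,dz$ by the same bookkeeping: the horizontal pair produces $iL\int(x+i(N+1)L)\,dx$ plus a multiple of $L$ times an integer winding of $G$ on a horizontal segment; the vertical pair, using periodicity in $x$, reduces to $-L$ times the increment of $\log G$ along the left edge, which by the quasi-periodicity equals $\pi d - iMLL + NL^{2}$ modulo $2\pi i$. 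Collecting the pieces and using $L^{2}=2\pi d$ produces precisely $\sqrt{2\pi d}(M+iN)+d^{3/2}\sqrt{\pi/2}(1+i)$ modulo the lattice $\sqrt{2\pi d}({\mathbb Z}+i{\mathbb Z})$, which is the stated constraint. The main obstacle I anticipate is tracking the $2\pi i$ ambiguities from the logarithmic branch on the horizontal and vertical segments; the cure is to work directly with the contour integral rather than with $\log G$.

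For part (2), let $H(z)$ denote the right-hand side of Eq.(\ref{500}). I first check that $H$ has a simple zero at each $\zeta_n$: since $\Theta_{3}(u,i)$ vanishes only at $u = \tfrac{\pi}{2}(1+i)$ modulo the period lattice, the argument $w_{n}(z)$ hits this value exactly at $z=\zeta_{n}$, so $H$ has precisely the same $d$ zeros as $G$ in $S$. Next I verify the quasi-periodicity. Under $z\mapsto z+L$ each $w_{n}$ shifts by $\pi$, leaving every $\Theta_{3}$ factor invariant, and the prefactor picks up $\exp[-i\sqrt{2\pi/d}\,NL]=\exp[-2\pi iN]=1$, so $H(z+L)=H(z)$. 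Under $z\mapsto z+iL$ each $w_{n}$ shifts by $i\pi=\pi\tau$ with $\tau=i$, so $\Theta_{3}(w_{n}+i\pi,i)=\exp(\pi-2iw_{n})\Theta_{3}(w_{n},i)$, giving the product a factor $\exp\bigl(d\pi - 2i\sum_{n}w_{n}(z)\bigr)$. Inserting the constraint of Eq.(\ref{con}) into $\sum_{n}w_{n}(z)$ yields
\begin{equation*}
\sum_{n=1}^{d} w_{n}(z) = z\sqrt{\pi d/2} - \pi(M+iN),
\end{equation*}
and combining this with the prefactor's contribution $\exp(2\pi N)$ gives a total multiplier $\exp(\pi d - iz\sqrt{2\pi d})$, exactly matching the quasi-periodicity of $G$ in Eq.(\ref{periodicity}).

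Finally, the ratio $G(z)/H(z)$ is entire (the zeros cancel since they coincide in position and multiplicity), and doubly periodic with periods $L$ and $iL$, hence bounded, hence constant by Liouville. Calling this constant $C$ produces Eq.(\ref{500}), with $C$ fixed by the normalization $\langle g|g\rangle=1$. The nontrivial step here is the algebraic computation that the multiplier on $H$ under $z\mapsto z+iL$ matches the one on $G$; this is where the precise form of the constraint in Eq.(\ref{con}) and the additive term $\pi(1+i)/2$ in $w_{n}$ are needed, and is the step I would check most carefully.
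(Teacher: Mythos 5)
The paper does not actually prove this proposition: its ``proof'' is a one-line citation to \cite{ZV} and \cite{L}. Your self-contained argument is the standard one for theta-type entire functions --- the argument principle on $\partial S$ for the zero count and for $\sum_\nu\zeta_\nu$, then a product ansatz with the correct zeros and quasi-periodicity followed by Liouville --- and the key computations check out. In particular, the multiplier picked up by $\prod_n\Theta_3(w_n(z);i)$ under $z\mapsto z+i\sqrt{2\pi d}$ is $\exp(\pi d-iz\sqrt{2\pi d}-2\pi N)$ once the constraint is inserted into $\sum_n w_n(z)$, and the prefactor's contribution $\exp(2\pi N)$ converts this into exactly the factor in Eq.(\ref{periodicity}); the ratio of $G$ by your candidate is then elliptic without zeros or poles, hence constant. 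This fills in a proof the paper omits.

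Two points are worth flagging. First, your contour argument determines $\sum_\nu\zeta_\nu$ only modulo the lattice $\sqrt{2\pi d}\,({\mathbb Z}+i{\mathbb Z})$ --- and that is in fact the correct form of the constraint. For the momentum state $\ket{P;k}$ the $d$ zeros lie in the cell $S_{0,0}$, yet their sum is $d^{3/2}\sqrt{\pi/2}\,(1+i)-ik\sqrt{2\pi d}$, so Eq.(\ref{con}) read literally with the cell labels $(M,N)=(0,0)$ fails for $k\neq 0$. Your hedge ``modulo the lattice'' is therefore not a weakness of the proof but a needed sharpening of the statement; correspondingly, in part (2) the integer $N$ must be taken as the imaginary lattice coordinate of $\sum_\nu\zeta_\nu-d^{3/2}\sqrt{\pi/2}\,(1+i)$ rather than as the cell label, which is how your verification in fact uses it. Second, two pieces of routine bookkeeping should be made explicit in a full write-up: the argument principle requires $G\neq 0$ on $\partial S$ (perturb the contour otherwise), and the sign convention on the vertical edges (you write ``$-L$ times the increment of $\log G$ along the left edge'' but then quote the upward increment) must be fixed consistently --- as you anticipate, working directly with the contour integral rather than with a branch of $\log G$ resolves both the sign and the $2\pi i$ ambiguities, and the residual integer ambiguities are exactly the lattice vectors absorbed in the first point.
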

\begin{proof}
The proof has been given in \cite{ZV}. Eq.(\ref{con}) has also been given in \cite{L}.
\end{proof}
We note that in finite systems the $d-1$ zeros define uniquely the state (the last zero is determined from Eq.(\ref{con})).
In infinite systems the zeros do not define uniquely the state.

\subsection{Analytic representations on a strip for systems on a circle}

The state $\ket{q}$, is represented with the function $q(x)$ in the $x$-representation, and it is now represented with the analytic function
\begin{eqnarray}\label{A1}
Q(z)=\int_{0}^{2\pi}dx q(x)\Theta_3\left [\frac{x-z}{2};\frac{i}{2\pi} \right ].
\end{eqnarray}
The integrand in this integral is periodic with period $2\pi$.
The function $Q(z)$ is periodic:
\begin{eqnarray}\label{A2}
Q(z+2\pi)=Q(z).
\end{eqnarray}
Therefore it is sufficient to define the function $Q(z)$ on the strip $A=[0,2\pi]\times {\mathbb R}$ in the complex plane.

As examples we consider the states
\begin{eqnarray}
&&\ket{x}\;\;\rightarrow\;\;2\pi\Theta_3\left [\frac{x-z}{2};\frac{i}{2\pi} \right ]\nonumber\\
&&\ket{N}\;\;\rightarrow\;\;2\pi \exp\left (-\frac{N^2}{2}+iNz\right)
\end{eqnarray}
\begin{proposition}[orthogonality relation]
\begin{eqnarray}
\int _Adm(z)\Theta_3\left [\frac{x-z}{2};\frac{i}{2\pi} \right ]\Theta_3\left [\frac{y-z^*}{2};\frac{i}{2\pi} 
\right ]=\delta (x-y);\;\;\;\;dm(z)=\frac{1}{4\pi ^{5/2}}\exp (-z_I^2)d^2z
\end{eqnarray}
\end{proposition}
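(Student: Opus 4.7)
The plan is to expand both Theta functions in their defining Fourier series \eqref{pa4}, separate the real and imaginary parts of $z$, and reduce the double integral to a Kronecker delta (from the compact $z_R$ direction) times a Gaussian integral (in the non-compact $z_I$ direction). The comb delta identity stated in the preliminaries will close the argument.

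First I would write, using $\tau=i/(2\pi)$ so that $i\pi\tau n^2=-n^2/2$,
\begin{eqnarray}
\Theta_3\!\left[\tfrac{x-z}{2};\tfrac{i}{2\pi}\right]\Theta_3\!\left[\tfrac{y-z^{*}}{2};\tfrac{i}{2\pi}\right]
=\sum_{n,m}\exp\!\left[-\tfrac{n^2+m^2}{2}+inx-imy-inz+imz^{*}\right].
\end{eqnarray}
Setting $z=z_R+iz_I$ gives $-inz+imz^{*}=i(m-n)z_R+(m+n)z_I$, so the integrand factorises into a $z_R$-part and a $z_I$-part.

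Next I would carry out the $z_R$ integral over $[0,2\pi]$, which yields $2\pi\,\delta_{m,n}$ and collapses the double sum to a single sum over $n$, with an $e^{2n z_I}$ factor in the $z_I$-integrand. The Gaussian $\int_{-\infty}^{\infty}\exp(-z_I^2+2nz_I)\,dz_I=\sqrt{\pi}\,e^{n^2}$ exactly cancels the $e^{-n^2}$ coming from $e^{-(n^2+m^2)/2}$ at $m=n$.

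Combining the prefactor $1/(4\pi^{5/2})$ with the $2\pi$ from the $z_R$-integral and the $\sqrt{\pi}\,e^{n^2}$ from the $z_I$-integral leaves
\begin{eqnarray}
\frac{1}{2\pi}\sum_{n=-\infty}^{\infty}\exp[in(x-y)],
\end{eqnarray}
which equals $\delta(x-y)$ by the comb delta identity given in Section II.B. No step looks genuinely obstructive; the only care required is in tracking the $\pi$-powers and verifying that the $e^{-n^2}$ weight from the Theta series is precisely compensated by the $z_I$ Gaussian (which makes this relation nontrivial compared to a naive $L^2$ orthogonality).
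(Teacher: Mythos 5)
Your proof is correct and follows essentially the same route as the paper's: expand both Theta functions in their defining series, let the $z_R$-integral over $[0,2\pi]$ produce a Kronecker delta that collapses the double sum, evaluate the $z_I$ Gaussian so that it cancels the $e^{-n^2}$ weight, and finish with the comb delta identity. (The sign slip $-imy+imz^*$ in place of $+imy-imz^*$ is just the relabeling $m\to-m$ and is harmless.)
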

\begin{proof}
Using  the definition of Theta functions in Eq.(\ref{pa4}), we get
\begin{eqnarray}\label{rtyuio}
&&\frac{1}{4\pi^{5/2}}\sum_{N,K=-\infty}^{\infty}\exp\left(iNx+iKy\right)\exp\left[-\frac{1}{2}\left(N^2+K^2\right)\right]
\int_{-\infty}^{\infty}dz_I\exp\left(-z_I^2+\frac{Nz_I}{2}-\frac{Kz_I}{ 2}\right)\nonumber\\
&&\times\int_{0}^{2\pi}dz_R\exp\left(-\frac{iKz_R}{2}-\frac{iNz_R}{2}\right)\nonumber\\
&&=\frac{1}{2\pi^{3/2}}\sum_{N,K=-\infty}^{\infty}\exp(iNx+iKy)
\int_{-\infty}^{\infty}dz_I\exp\left(-z_I^2+\frac{Nz_I}{2}-\frac{Kz_I}{ 2}\right)\exp\left[-\frac{1}{2}\left(N^2+K^2\right)\right]\delta(K,-N)
\nonumber\\&&=\delta(x-y)
\end{eqnarray}
\end{proof}
Using this proposition we find that the scalar product is given by
\begin{eqnarray}\label{A3}
\bra{q_2}q_1\rangle=\frac{1}{2\pi}\int _Adm(z)Q_1(z)[Q_2(z)]^*
\end{eqnarray}
Also
\begin{eqnarray}\label{pa25}
q(x)=\int _Adm(z)Q(z)\Theta_3\left [\frac{x-z^*}{2};\frac{i}{2\pi} \right] 
\end{eqnarray}

\section{The reproducing kernel formalism for finite quantum systems}
Given a `fiducial state' $\ket{f}$, let $F(z)$ be its analytic representation.
Below we consider the $d^2$ states ${\cal D}(\alpha,\beta)\ket {f}$ in the analytic representation.
The fiducial vector should not be a position or a momentum state because in this case 
many of the ${\cal D}(\alpha,\beta)\ket {f}$ differ only by a phase factor, and represent the same physical state.
The fiducial state should be a `generic vector'.
Using an expansion of the fiducial state analogous to Eq.(\ref{6}), we find that the overlap of two coherent states is 
\begin{eqnarray}
\bra{f}{\cal D}(-\gamma,-\delta){\cal D}(\alpha,\beta)\ket{f}=
\omega \left [\frac{1}{2}(\alpha \beta +\gamma \delta)-\beta \gamma\right ]
\sum _n f_{n+\beta -\delta }^*f_n \omega [(\alpha -\gamma )n].
\end{eqnarray}

The analytic functions representing the states ${\cal D}(\alpha,\beta)\ket {f}$, are
\begin{eqnarray}\label{aaa}
&&{\mathfrak D}(z; \alpha, \beta; f)=\pi^{-1/4} \sum_{m=0}^{d-1} \bra{X;m}{\cal D}(\alpha,\beta)\ket {f}\;\Theta_3 \left [\frac{\pi m}{d}-z\sqrt{\frac{\pi}{2d}};
\frac{i}{d}\right ];\;\;\;\;\alpha, \beta \in {\mathbb Z}(d).
\end{eqnarray}
The $f$ in the notation, indicates the dependence on the fiducial state.
They obey periodicity relations analogous to Eq.(\ref{periodicity}):
\begin{eqnarray}
 &&{\mathfrak D}(z+\sqrt{2\pi d}; \alpha, \beta; f)= {\mathfrak D}(z; \alpha, \beta; f)\nonumber\\
 &&{\mathfrak D}(z+i\sqrt{2\pi d}; \alpha, \beta;f)= {\mathfrak D}(z; \alpha, \beta; f)\exp \left (\pi d-i z\sqrt{2\pi d} \right ).
\end{eqnarray}
For a fixed fiducial vector $\ket{f}$, the set of the $d^2$ analytic functions ${\mathfrak D}(z; \alpha, \beta; f)$ with $\alpha, \beta\in {\mathbb Z}(d)$,
are the analogue in the present context, of the coherent states for a harmonic oscillator.

\begin{proposition}\label{pro1}
\mbox{}
\begin{itemize}
\item[(1)]
The ${\mathfrak D}(z; \gamma, \delta; f)$ is a two-dimensional Fourier transform of ${\mathfrak D}(-z; \alpha, \beta; f)$: 
\begin{eqnarray}\label{13c}
{\mathfrak D}(z; \gamma, \delta; f)=\frac{1}{d}\sum _{\alpha, \beta}\omega (-2^{-1}\beta \gamma +2^{-1}\alpha \delta){\mathfrak D}(-z; \alpha, \beta; f)
\end{eqnarray}
\item[(2)]
The ${\mathfrak D}(z; \alpha, \beta; f)$ of 
the coherent state ${\mathcal D}(\alpha, \beta)\ket {f}$ is related to $F(z)$ of the fiducial vector as follows:
\begin{eqnarray}\label{13bb}
{\mathfrak D}(z; \alpha, \beta; f)=\omega (-2^{-1}\alpha \beta)F\left (z-\beta\sqrt{\frac{2\pi}{d}} +i \alpha\sqrt{\frac {2\pi }{d}}\right )\exp \left (iz\alpha \sqrt{\frac{2\pi}{d}}
-\frac{\pi\alpha ^2}{d}\right)
\end{eqnarray}
The zeros $\zeta _{\nu}$ of the analytic representation $F(z)$ of the fiducial state, are related to the zeros $\zeta _{\nu}(\alpha, \beta)$ of ${\mathfrak D}(z; \alpha, \beta; f)$, as follows:
\begin{eqnarray}\label{14a}
\zeta _{\nu}(\alpha , \beta )=\zeta _{\nu}-i\alpha\sqrt{\frac{2\pi }{d}}+\beta\sqrt{\frac{2\pi }{d}}
;\;\;\;\;\alpha, \beta \in {\mathbb Z}(d)
\end{eqnarray}
\item[(3)]
The resolution of the identity in the language of analytic representation is 
\begin{eqnarray}\label{21}
\frac{1}{d}\sum _{\alpha, \beta}{\mathfrak D}(z; \alpha, \beta; f)[{\mathfrak D}(w; \alpha, \beta; f)]^*=K(z,w^*)
\end{eqnarray}
where $K(z,w^*)$ is the reproducing kernel, and is given by
\begin{eqnarray}\label{22}
&&K(z,w^*)=\pi ^{-1/2}\sum_{m=0}^{d-1} \Theta_3 \left (\frac{\pi m}{d}-z\sqrt{\frac{\pi}{2d}};\frac{i}{d}\right )
\Theta_3 \left (\frac{\pi m}{d}-w^*\sqrt{\frac{\pi}{2d}};\frac{i}{d}\right )\nonumber\\
&&K(z,w^*)=K(w^*,z);\;\;\;\;\;\;K(z,w^*)=K(-z,w^*)
\end{eqnarray}
$K(z,w^*)$ does not depend on the fiducial vector $\ket{f}$.
\item[(4)]Reproducing kernel relation:
For any $G(z)$ in the space ${\cal A}$
\begin{eqnarray}\label{23}
G(z)=\frac{1}{d^{3/2}\sqrt{2\pi } }\int_S d\mu (w) K(z,w^*)G(w).
\end{eqnarray}
\item[(5)]
$G(z)$ can be written in terms of the expansions
\begin{eqnarray}\label{24a}
G(z)=\frac{1}{d}\sum _{\alpha, \beta}{\mathfrak D}(z; \alpha, \beta; f)g(\alpha, \beta; f);\;\;\;\;\;g(\alpha, \beta; f)=\bra{f}{\cal D}(-\alpha, -\beta) \ket{g}
\end{eqnarray}
and also 
\begin{eqnarray}\label{24b}
G(z)=\frac{1}{d}\sum _{\gamma , \delta}{\mathfrak D}(-z; \gamma , \delta; f){\widetilde g}(\gamma , \delta; f);\;\;\;\;\;\;
{\widetilde g}(\gamma , \delta; f)&=&\bra{f}{\cal P}(-2^{-1}\gamma, -2^{-1}\delta ) \ket{g}
\end{eqnarray}
The inverses of these relations are:
\begin{eqnarray}\label{29a}
g(\alpha, \beta; f)=\frac{1}{d^{3/2}\sqrt{2\pi }}\int_S d\mu (w)[{\mathfrak D}(w; \alpha, \beta; f)]^*G(w)
\end{eqnarray}
and
\begin{eqnarray}\label{29b}
{\widetilde g}(\gamma , \delta; f)=\frac{1}{d^{3/2}\sqrt{2\pi }}\int_S d\mu (w)[{\mathfrak D}(-w;\gamma, \delta;f)]^*G(w)
\end{eqnarray}
The ${\widetilde g}(\gamma , \delta; f)$ is related to
$g(\alpha, \beta; f)$ through a two-dimensional Fourier transform
\begin{eqnarray}\label{38}
{\widetilde g}(\gamma , \delta; f)=\frac{1}{d}\sum _{\alpha, \beta}
g(\alpha, \beta; f)\omega (2^{-1}\beta \gamma -2^{-1}\alpha \delta)
\end{eqnarray}
\item[(6)]
The following equations, which can be called `marginal properties', relate our analytic representation to the $X$- and $P$-representation:
\begin{eqnarray}\label{marg}
&&\frac{1}{d}\sum _{\alpha =0}^{d-1}{\mathfrak D}(z; \alpha, 2\beta; f)=\pi^{-1/4} f_{-\beta}\;\Theta_3 \left [\frac{\pi \beta}{d}-z\sqrt{\frac{\pi}{2d}};
\frac{i}{d}\right ]\nonumber\\
&&\frac{1}{d}\sum _{\beta =0}^{d-1}{\mathfrak D}(z; 2\alpha, \beta; f)=
\pi^{-1/4}{\widetilde f}_{-\alpha}\exp \left (-\frac{z^2}{2}\right )\Theta_3 \left [\frac{\pi \alpha}{d}-iz\sqrt{\frac{\pi}{2d}};\frac{i}{d}\right ].
\end{eqnarray}

\end{itemize}
\end{proposition}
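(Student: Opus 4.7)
The plan rests on two basic facts about the $\Theta_{3}$-basis used in (\ref{aaa1}). First, $\Theta_{3}$ is even in its first argument, so together with $\bra{X;m}{\cal F}^{2}=\bra{X;-m}$ the map $z\mapsto -z$ on the analytic side corresponds to ${\cal F}^{2}={\cal P}(0,0)$ on the state side; hence ${\mathfrak D}(-z;\alpha,\beta;f)$ is the analytic representation of ${\cal F}^{2}{\cal D}(\alpha,\beta)\ket{f}={\cal D}(-\alpha,-\beta){\cal F}^{2}\ket{f}$. Second, completing the square in the series (\ref{pa4}) yields the quasi-shift identity
\begin{equation*}
\Theta_{3}\!\left(u-\tfrac{i\pi\alpha}{d},\tfrac{i}{d}\right)=\exp\!\left(\tfrac{\pi\alpha^{2}}{d}+2i\alpha u\right)\Theta_{3}\!\left(u,\tfrac{i}{d}\right),
\end{equation*}
which converts a phase $\omega(\alpha m)$ attached to the basis into an imaginary translation $z\mapsto z+i\alpha\sqrt{2\pi/d}$ together with a multiplicative factor $\exp(iz\alpha\sqrt{2\pi/d}-\pi\alpha^{2}/d)$.

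Part~(1) then follows by Fourier-inverting (\ref{45}) to express ${\cal D}(\gamma,\delta)$ as a weighted sum over ${\cal P}(2^{-1}\alpha,2^{-1}\beta)={\cal D}(\alpha,\beta){\cal F}^{2}$ (from (\ref{4})), commuting ${\cal F}^{2}$ to the left via ${\cal D}(\alpha,\beta){\cal F}^{2}={\cal F}^{2}{\cal D}(-\alpha,-\beta)$ and relabelling $(\alpha,\beta)\to(-\alpha,-\beta)$; the resulting identity ${\cal D}(\gamma,\delta)\ket{f}=\frac{1}{d}\sum_{\alpha,\beta}\omega(-2^{-1}\beta\gamma+2^{-1}\alpha\delta){\cal F}^{2}{\cal D}(\alpha,\beta)\ket{f}$ becomes (\ref{13c}) in the analytic picture. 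Part~(2) is a direct computation from the definition: I would expand $\bra{X;m}{\cal D}(\alpha,\beta)\ket{f}=\omega(-2^{-1}\alpha\beta)\omega(\alpha m)f_{m-\beta}$, reindex $m\to m+\beta$ (producing the real translation $z\mapsto z-\beta\sqrt{2\pi/d}$ in the $\Theta_{3}$ argument), and absorb the surviving $\omega(\alpha m)$ via the quasi-shift identity above. The phases then collapse to $\omega(-2^{-1}\alpha\beta)$ and (\ref{14a}) reads off from the shifted argument of $F$.

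Part~(3) is the Heisenberg--Weyl resolution of the identity $\frac{1}{d}\sum_{\alpha,\beta}{\cal D}(\alpha,\beta)\ket{f}\bra{f}{\cal D}^{\dagger}(\alpha,\beta)={\bf 1}$, taken in position-basis matrix elements $\bra{X;m}\cdot\ket{X;n}=\delta_{mn}$; multiplying by $\pi^{-1/2}\Theta_{3}[\pi m/d-z\sqrt{\pi/2d};i/d]\Theta_{3}[\pi n/d-w^{*}\sqrt{\pi/2d};i/d]$ and summing reconstructs (\ref{21}) with kernel (\ref{22}), whose two symmetries are immediate from evenness of $\Theta_{3}$. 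Part~(4) comes from substituting the coefficient recovery formula (\ref{5b}) into the expansion of $G(z)$, swapping the finite sum with the integral, and changing variables $w\to w^{*}$ (justified by (\ref{periodicity})). In part~(5), (\ref{24a}) is the above resolution of the identity applied to $\ket{g}$ in analytic form; (\ref{24b}) is the same identity with the normalized fiducial vector ${\cal F}^{2}\ket{f}$, together with ${\cal D}(\alpha,\beta){\cal F}^{2}={\cal P}(2^{-1}\alpha,2^{-1}\beta)$ and a relabelling; the inversions (\ref{29a})--(\ref{29b}) come from (\ref{scalar}) applied to $\langle{\cal D}(\alpha,\beta)f|g\rangle$ and $\langle{\cal P}(-2^{-1}\gamma,-2^{-1}\delta)f|g\rangle$; and (\ref{38}) is obtained by substituting (\ref{45}) into ${\widetilde g}(\gamma,\delta;f)=\bra{f}{\cal P}(-2^{-1}\gamma,-2^{-1}\delta)\ket{g}$ and recognizing $\bra{f}{\cal D}(\alpha,\beta)\ket{g}=g(-\alpha,-\beta;f)$. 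Part~(6) is the simplest: in the first marginal the inner sum $\frac{1}{d}\sum_{\alpha}\omega(\alpha(m-\beta))=\delta_{m,\beta}$ forces $m=\beta$, and in the second the reindexing $n=m-\beta$ converts the $\beta$-sum into the finite Fourier transform giving ${\widetilde f}_{-\alpha}$, while the residual $m$-sum is precisely the $\Theta_{3}$-to-$\Theta_{3}$ resummation performed in the Example immediately preceding the proposition.

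The hardest piece is the phase bookkeeping in part~(2): three independent contributions (from ${\cal D}(\alpha,\beta)$, from the reindexing of $m$, and from the $\Theta_{3}$ quasi-shift) must conspire to produce the clean form (\ref{13bb}). In particular, the quadratic correction $-\pi\alpha^{2}/d$ outside $F$ arises only from the quasi-shift, and any confusion between $\sqrt{\pi/2d}$ and $\sqrt{2\pi/d}$ (which differ by a factor of $2$) or any sign error in the $\alpha$ of the quasi-shift would destroy it.
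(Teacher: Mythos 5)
Your proposal is correct and follows essentially the same architecture as the paper's proof. Part (1) is the paper's argument in operator form: the paper routes it through the auxiliary functions ${\mathfrak P}(z;\alpha,\beta;f)$ and the identity ${\mathfrak P}(z;\alpha,\beta;f)={\mathfrak D}(-z;-2\alpha,-2\beta;f)$, which is exactly your observation that $z\mapsto -z$ implements ${\cal F}^2={\cal P}(0,0)$, combined with Eq.(\ref{45}). Part (2) is the identical reindex-then-quasi-shift computation; your quasi-shift identity, applied at $u=\pi(m+\beta)/d-z\sqrt{\pi/2d}$, reproduces the paper's intermediate formula including the factor $\exp(iz\alpha\sqrt{2\pi/d}-\pi\alpha^2/d)$ and the residual phase $\omega(-\alpha\beta)$, so the bookkeeping you flag as delicate does close. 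Parts (3)--(5) rest on the same resolution of the identity, Eq.(\ref{689}). A few of your sub-derivations are more direct than the paper's: you obtain Eqs.(\ref{29a}),(\ref{29b}) from the scalar product (\ref{scalar}) and Eq.(\ref{38}) by substituting Eq.(\ref{45}) into the definition of ${\widetilde g}$, whereas the paper verifies these by inserting them back into Eqs.(\ref{24a}),(\ref{24b}); and in the second marginal of Eq.(\ref{marg}) your single substitution $n=m-\beta$ replaces the paper's change of variables $(\mu,\lambda)=(m+\beta,m-\beta)$ and reaches the same answer via the Example. One caveat: the relation $K(z,w^*)=K(-z,w^*)$ is not ``immediate from evenness of $\Theta_3$.'' Evenness together with the $\pi$-periodicity of $\Theta_3$ in its first argument gives $K(z,w^*)=K(-z,-w^*)$ (equivalently $K(-z,w^*)=K(z,-w^*)$), and expanding the kernel in Fourier modes shows that $K(-z,w^*)$ and $K(z,w^*)$ have different nonzero coefficients in general; the printed symmetry appears to be a misprint for $K(z,w^*)=K(-z,-w^*)$, which also follows by combining Eq.(\ref{13c}) with Eq.(\ref{21}). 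Since the paper's own proof does not address this symmetry at all, this does not change the overall verdict, but you should not claim it as immediate.
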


\begin{proof}
\mbox{}
\begin{itemize}
\item[(1)]
We first define the
\begin{eqnarray}\label{aaa10}
&&{\mathfrak P}(z; \alpha, \beta ;f)=\pi^{-1/4} \sum_{m=0}^{d-1} \bra{X;m}{\cal P}(\alpha,\beta)\ket {f}\;\Theta_3 \left [\frac{\pi m}{d}-z\sqrt{\frac{\pi}{2d}};\frac{i}{d}\right ]
\end{eqnarray}
Using Eq.(\ref{45}), we prove that
\begin{eqnarray}\label{13a}
{\mathfrak P}(z; \gamma, \delta ;f)=\frac{1}{d}\sum _{\alpha , \beta}\omega (\beta \gamma -\alpha \delta){\mathfrak D}(z; \alpha, \beta;f).
\end{eqnarray}

For odd $d$, we use the relations
\begin{eqnarray}
&&\bra{X;m}{\cal D}(\alpha,\beta)=\omega (-2^{-1}\alpha \beta+\alpha m)\bra {X;m-\beta}\nonumber\\
&&\bra{X;m}{\cal P}(\alpha,\beta)
=\omega (-2\alpha \beta+2\alpha m)\bra {X;-m+2\beta}
\end{eqnarray}
to prove that
\begin{eqnarray}\label{13b}
{\mathfrak P}(z; \alpha, \beta;f)={\mathfrak D}(-z; -2\alpha, -2\beta;f).
\end{eqnarray}
We then insert Eq.(\ref{13b}) into Eq.(\ref{13a}) and we get Eq.(\ref{13c}).
\item[(2)]
We first point out that
\begin{eqnarray}
 \bra{X;m}{\cal D}(\alpha,\beta)\ket {f}=\omega (-2^{-1}\alpha \beta +\alpha m)f_{m-\beta}
\end{eqnarray}
Therefore
\begin{eqnarray}\label{33a}
{\mathfrak D}(z; \alpha, \beta;f)&=&\pi^{-1/4} \sum_{m=0}^{d-1}\omega (-2^{-1}\alpha \beta +\alpha m)f_{m-\beta} \;\Theta_3 \left [\frac{\pi m}{d}-z\sqrt{\frac{\pi}{2d}};\frac{i}{d}\right ]\nonumber\\
&=&\pi^{-1/4} \omega (2^{-1}\alpha \beta )\sum_{m=0}^{d-1}f_m\omega (\alpha m)\;\Theta_3 \left [\frac{\pi (m+\beta)}{d}-z\sqrt{\frac{\pi}{2d}};\frac{i}{d}\right ]
\end{eqnarray}
We then show that
\begin{eqnarray}
&&\omega (\alpha m)\;\Theta_3 \left [\frac{\pi (m+\beta)}{d}-z\sqrt{\frac{\pi}{2d}};\frac{i}{d}\right ]\nonumber\\&&=
\omega (-\alpha \beta)\exp\left [i\alpha z\sqrt{\frac{2\pi }{d}}-\frac{\pi\alpha ^2}{d}\right ]
\Theta_3 \left [\frac{\pi m}{d}-z\sqrt{\frac{\pi}{2d}}+\frac{\beta \pi }{d}-i\frac{\alpha \pi }{d};\frac{i}{d}\right ]
\end{eqnarray}
From this follows Eq.(\ref{13bb}).
Eq.(\ref{14a}) follows immediatelly from Eq.(\ref{13bb}).

\item[(3)]
The proof is based on the property \cite{R2}
\begin{eqnarray}\label{689}
\frac{1}{d}\sum _{\alpha,\beta}{\cal D}(\alpha,\beta)\ket {f}\bra{f}[{\cal D}(\alpha,\beta)]^{\dagger}={\bf 1}
\end{eqnarray}
From this follows that
\begin{eqnarray}
&&\frac{1}{d}\Theta_3 \left (\frac{\pi m}{d}-z\sqrt{\frac{\pi}{2d}};\frac{i}{d}\right )\left [\sum _{\alpha,\beta}{\cal D}(\alpha,\beta)\ket {f}
\bra{f}[{\cal D}(\alpha,\beta)]^{\dagger}\right ]\Theta_3 \left (\frac{\pi n}{d}-w^*\sqrt{\frac{\pi}{2d}};\frac{i}{d}\right )
 \nonumber\\&&={\bf 1} \Theta_3 \left (\frac{\pi m}{d}-z\sqrt{\frac{\pi}{2d}};\frac{i}{d}\right )
\Theta_3 \left (\frac{\pi n}{d}-w^*\sqrt{\frac{\pi}{2d}};\frac{i}{d}\right ).
\end{eqnarray}
Therefore
\begin{eqnarray}
&&\Theta_3 \left (\frac{\pi m}{d}-z\sqrt{\frac{\pi}{2d}};\frac{i}{d}\right )\bra{X;m}\left [\sum _{\alpha,\beta}{\cal D}(\alpha,\beta)\ket {f}
\bra{f}[{\cal D}(\alpha,\beta)]^{\dagger}\right ]\ket{X;n}\Theta_3 \left (\frac{\pi n}{d}-w^*\sqrt{\frac{\pi}{2d}};\frac{i}{d}\right )
 \nonumber\\&&=d\delta (m,n)\Theta_3 \left (\frac{\pi m}{d}-z\sqrt{\frac{\pi}{2d}};\frac{i}{d}\right )
\Theta_3 \left (\frac{\pi n}{d}-w^*\sqrt{\frac{\pi}{2d}};\frac{i}{d}\right ).
\end{eqnarray}
From this follows that
\begin{eqnarray}
&&\sum _{\alpha,\beta}\Theta_3 \left (\frac{\pi m}{d}-z\sqrt{\frac{\pi}{2d}};\frac{i}{d}\right )\bra{X;m}{\cal D}(\alpha,\beta)\ket {f}
\bra{f}[{\cal D}(\alpha,\beta)]^{\dagger}\ket{X;n}\Theta_3 \left (\frac{\pi n}{d}-w^*\sqrt{\frac{\pi}{2d}};\frac{i}{d}\right )
 \nonumber\\&&=d\delta (m,n)\Theta_3 \left (\frac{\pi m}{d}-z\sqrt{\frac{\pi}{2d}};\frac{i}{d}\right )
\Theta_3 \left (\frac{\pi n}{d}-w^*\sqrt{\frac{\pi}{2d}};\frac{i}{d}\right ).
\end{eqnarray}
Summation over $m$ and $n$ gives in part of Eq.(\ref{21}).

\item[(4)]
In order to prove Eq.(\ref{23}) we insert Eq.(\ref{22}) into Eq.(\ref{23}) and use Eq.(\ref{5b}).
\item[(5)]
Eqs(\ref{24a}), (\ref{24b}) are proved using the resolution of the identity.
In order to prove Eq.(\ref{29a}) we insert it into Eq.(\ref{24a}) and use Eq.(\ref{21}).
In a similar way we prove Eq.(\ref{29b}).
For the prrof of Eq.(\ref{38}) we insert it into Eq.(\ref{24b}) and we use Eq.(\ref{13c}).
\item[(6)]
Using Eq.(\ref{33a}) we get
\begin{eqnarray}
\frac{1}{d}\sum _{\alpha}{\mathfrak D}(z; \alpha,2 \beta;f)
&=&\pi^{-1/4} \sum _{m}f_m\;\Theta_3 \left [\frac{\pi (m+2\beta)}{d}-z\sqrt{\frac{\pi}{2d}};\frac{i}{d}\right ]
\frac{1}{d}\sum_{\alpha=0}^{d-1}\omega (\alpha m+\alpha \beta)\nonumber\\
&=&\pi^{-1/4} \sum _{m}f_m\;\Theta_3 \left [\frac{\pi (m+2\beta)}{d}-z\sqrt{\frac{\pi}{2d}};\frac{i}{d}\right ]\delta (m+\beta,0)\nonumber\\
&=&\pi^{-1/4} f_{-\beta}\;\Theta_3 \left [\frac{\pi \beta}{d}-z\sqrt{\frac{\pi}{2d}};\frac{i}{d}\right ]
\end{eqnarray}
In order to prove the second equation we use again Eq.(\ref{33a}) and we get
\begin{eqnarray}
\frac{1}{d}\sum _{\beta}{\mathfrak D}(z; 2\alpha, \beta;f)&=&\pi^{-1/4}\frac{1}{d}\sum _{m,\beta} \omega (\alpha \beta +2\alpha m) \;
f_{m}\;\Theta_3 \left [\frac{\pi (m+\beta)}{d}-z\sqrt{\frac{\pi}{2d}};\frac{i}{d}\right ]\nonumber\\&=&
\pi^{-1/4}\frac{1}{d^{3/2}}\sum _{m,\beta, \kappa} \omega (\alpha \beta +2\alpha m+m\kappa) \;
{\widetilde f}_{\kappa}\;\Theta_3 \left [\frac{\pi (m+\beta)}{d}-z\sqrt{\frac{\pi}{2d}};\frac{i}{d}\right ]
\end{eqnarray}
We change the variables $(m, \beta)$ into $(\mu=m+\beta, \lambda=m-\beta)$.
This is bijective map from ${\mathbb Z}(d)\times {\mathbb Z}(d)$ into ${\mathbb Z}(d)\times {\mathbb Z}(d)$.
The existence of $2^{-1}$ in ${\mathbb Z}(d)$ with odd $d$, is important in proving this.
Therefore we get
\begin{eqnarray}
\frac{1}{d}\sum _{\beta}{\mathfrak D}(z; 2\alpha, \beta;f)&=&
\pi^{-1/4}\frac{1}{d^{3/2}}\sum _{\kappa, \mu, \lambda} \omega (2^{-1}\lambda \alpha +2^{-1}\lambda \kappa) \; \omega (2^{-1}3\alpha \mu+2^{-1}\mu \kappa) \;
{\widetilde f}_{\kappa}\;\Theta_3 \left [\frac{\pi \mu}{d}-z\sqrt{\frac{\pi}{2d}};\frac{i}{d}\right ]\nonumber\\&=&
\pi^{-1/4}\frac{1}{d^{1/2}}\sum _{\kappa, \mu} \delta (\alpha +\kappa)\; \omega (2^{-1}3\alpha \mu+2^{-1}\mu \kappa) \;
{\widetilde f}_{\kappa}\;\Theta_3 \left [\frac{\pi \mu}{d}-z\sqrt{\frac{\pi}{2d}};\frac{i}{d}\right ]\nonumber\\&=&
\pi^{-1/4}\frac{1}{d^{1/2}}{\widetilde f}_{-\alpha}\sum_{\mu}\omega (\alpha \mu)\;\Theta_3 \left [\frac{\pi \mu}{d}-z\sqrt{\frac{\pi}{2d}};\frac{i}{d}\right ]
\nonumber\\&=&
\pi^{-1/4}{\widetilde f}_{-\alpha}\exp \left (-\frac{z^2}{2}\right )\Theta_3 \left [\frac{\pi \alpha}{d}-iz\sqrt{\frac{\pi}{2d}};\frac{i}{d}\right ]
\end{eqnarray}
Eq.(\ref{aaa1}) has been used in the last step.
\end{itemize}
\end{proof}

\subsection{Physical meaning of proposition \ref{pro1}}\label{591}

The physical meaning of the various parts of the above proposition is as follows:
\begin{itemize}
\item[(1)]
${\mathfrak P}(z; \alpha, \beta ;f)$ in Eq.(\ref{aaa10}) are coherent states with fiducial vector ${\cal P}(0,0)\ket{f}$.
Unlike the Glauber coherent states where ${\cal P}(0,0)\ket{0}=\ket{0}$, here ${\cal P}(0,0)\ket{f}$ is in general different from $\ket{f}$. Consequently,
in Eq.(\ref{13a}) the coherent states ${\mathfrak P}(z; \alpha, \beta ;f)$ are related to the coherent states ${\mathfrak D}(z; \alpha, \beta;f)$
(with fiducial vector $\ket{f}$) through a two-dimensional Fourier transform.
Eq.(\ref{13b}) shows that the coherent states ${\mathfrak P}(z; \alpha, \beta ;f)$ are also the coherent states ${\mathfrak D}(-z; -2\alpha, -2\beta;f)$.
All these statements are distilled into Eq.(\ref{13c}), which shows that there is a two-dimensional Fourier transform between 
${\mathfrak D}(z; \gamma, \delta; f)$ and ${\mathfrak D}(-z; \alpha, \beta; f)$.
There is no analogue of this equation for standard coherent states \cite{KL}.
\item[(2)]
Eq.(\ref{13bb}) shows that the zeros of 
${\mathfrak D}(z; \alpha, \beta; f)$ are the zeros of 
the fiducial state $F(z)$
displaced by $(\frac{L\alpha}{d}, \frac{L\beta}{d})$, where $L$ is the length of each side of the cell $S$.
So the zeros of all ${\mathfrak D}(z; \alpha, \beta; f)$ form a $d\times d$ lattice within the torus.
\item[(3)]
This is the analogue of the resolution of the identity, in the language of the analytic representation.
It is important that the $K(z,w^*)$ does not depend on the fiducial vector $\ket{f}$. 
We have explained earlier, that the
fiducial vector should be a `generic vector' (i.e., it should not be a position or a momentum state).
\item[(4)]
Eq.(\ref{23}) shows that the $K(z,w^*)$ is indeed the reproducing kernel.
\item[(5)]
Eqs.(\ref{24a}),(\ref{24b}) analyze an arbitrary state in terms of coherent states, and the coefficients are given in Eqs.(\ref{29a}),(\ref{29b}).
\item[(6)]
Eq.(\ref{marg}) are the marginal properties of the displacement operators. 

\end{itemize}

\subsection{Coherent states with ${\cal F}\ket{f}$ as fiducial vector}

In general, if we change the fiducial vector we get a different set of coherent states.
Here we consider the fiducial vector ${\cal F}\ket{f}$, and show that we get the same set of coherent states.

In analogy to Eq.(\ref{aaa}), we introduce the $d^2$ analytic functions ${\mathfrak F}(z; \alpha, \beta;f)$, which are defined a s follows:
\begin{eqnarray}
{\mathfrak F}(z; \alpha, \beta;f)=\pi^{-1/4} \sum_{m=0}^{d-1} \bra{X;m}{\cal F}(\alpha,\beta)\ket {f}\;\Theta_3 \left [\frac{\pi m}{d}-z\sqrt{\frac{\pi}{2d}};
\frac{i}{d}\right ]
\end{eqnarray}
Eq.(\ref{4}) shows that they are coherent states with ${\cal F}\ket{f}$ as fiducial vector.
They are related to ${\mathfrak D}(z; \alpha, \beta; f)$, as follows:
\begin{eqnarray}\label{123}
{\mathfrak F}\left [z; -\frac{1}{2}(\alpha-\beta), -\frac{1}{2}(\alpha+\beta); f\right ]=\omega \left [\frac{1}{4}(\alpha^2+ \beta^2)\right]\exp \left (\frac{-z^2}{2}\right )
{\mathfrak D}(iz; \alpha, \beta; f)
\end{eqnarray}
Indeed 
\begin{eqnarray}
&&{\mathfrak F}(z; \alpha, \beta;f)=\pi^{-1/4} \sum_{m} \bra{X;m}{\cal F}(\alpha,\beta)\ket {f}\;\Theta_3 \left [\frac{\pi m}{d}-z\sqrt{\frac{\pi}{2d}};
\frac{i}{d}\right ]\nonumber\\
&&=\pi^{-1/4} \frac{1}{\sqrt{d}}\omega \left [\frac{1}{2}(\alpha ^2+\beta ^2)\right ]\sum_{m,n} \omega (mn)\bra{X;n}{\cal D}(-\alpha-\beta, \alpha-\beta)
\ket {f}\;\Theta_3 \left [\frac{\pi m}{d}-z\sqrt{\frac{\pi}{2d}};\frac{i}{d}\right ]
\end{eqnarray}
We then use Eq.(\ref{aaa1}) and we prove Eq.(\ref{123}).

\section{The reproducing kernel formalism for systems on a circle}

Coherent states on a circle have been studied in \cite{co1,co2,co3}.
Here we approach them using the language of analytic representations.
Let $\ket{r}$ be a `fiducial state' with analytic representation $R(z)$.
The fiducial state should be a `generic vector' (not a position or a momentum state).
Coherent states are defined as 
\begin{eqnarray}
\ket {a, K}_{\rm coh}=D(a, K)\ket{r};\;\;\;\;\;\ket {a +2\pi, K}_{\rm coh}=(-1)^K\ket {a, K}_{\rm coh}
\end{eqnarray}
The overlap of two coherent states is 
\begin{eqnarray}
_{\rm coh}\langle b, M\ket {a, K}_{\rm coh}&=&\frac{1}{2\pi}\int _0 ^{2\pi}dx \;r(x)\;r^*(x+a -b)\nonumber\\&\times&
\exp\left[i(K-M)x+i\left (\frac{Ka}{2}+\frac{Mb}{2}-Ma\right )\right ]
\end{eqnarray}
\begin{proposition}
The coherent states obey the resolution of the identity
\begin{eqnarray}\label{pa12}
\frac{1}{2\pi}\sum_{K=-\infty}^{\infty}\int_0^{2\pi} da \;\ket {a, K}_{\rm coh\;\;coh}\bra{a, K}={\bf 1}.
\end{eqnarray}
\end{proposition}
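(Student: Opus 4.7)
My plan is to prove the resolution of the identity by sandwiching the left hand side between two position eigenstates $\ket{x}$ and $\ket{y}$, and checking that one recovers $\bra{x}y\rangle=2\pi\delta(x-y)$. This will avoid any need to handle abstract operator identities directly and will reduce everything to concrete integrals and Fourier sums that the paper has already developed in Eqs.\,(\ref{pa2}) and the circle preliminaries.

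First I would compute the matrix element $\bra{x}D(a,K)\ket{r}$. From the action $D(a,K)\ket{x'}=\exp[iK(x'+a/2)]\ket{x'+a}$ in Eq.\,(\ref{pa2}), together with $D^\dagger(a,K)=D(-a,-K)$, one gets
\begin{equation*}
\bra{x}D(a,K)\ket{r}=\exp\!\left[iK\!\left(x-\tfrac{a}{2}\right)\right]r(x-a),\qquad
\bra{r}D^\dagger(a,K)\ket{y}=\exp\!\left[-iK\!\left(y-\tfrac{a}{2}\right)\right]r^{\ast}(y-a).
\end{equation*}
Substituting into the left hand side of Eq.\,(\ref{pa12}) gives the double sum/integral
\begin{equation*}
\frac{1}{2\pi}\sum_{K=-\infty}^{\infty}\int_0^{2\pi}da\;\exp[iK(x-y)]\,r(x-a)\,r^{\ast}(y-a).
\end{equation*}

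The key step is then to carry out the $K$-sum first. Using the comb-delta identity $\sum_{K}\exp[iK(x-y)]=2\pi\delta(x-y)$ stated in section II.B, the expression collapses to
\begin{equation*}
\delta(x-y)\int_0^{2\pi}da\;|r(x-a)|^2.
\end{equation*}
Because $r(x)$ is $2\pi$-periodic, a change of variable $a\mapsto x-a$ leaves the integration interval unchanged mod $2\pi$, and the normalization condition in Eq.\,(\ref{Bloch}) gives $\int_0^{2\pi}|r(x-a)|^2\,da=2\pi$. This leaves $2\pi\delta(x-y)=\bra{x}y\rangle$, which is exactly $\bra{x}\mathbf{1}\ket{y}$ in the conventions of section II.B, and hence establishes Eq.\,(\ref{pa12}).

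The only subtle point, which I would flag as the main obstacle, is the normalization bookkeeping: the $1/(2\pi)$ in front of the sum in Eq.\,(\ref{pa12}) has to match both the $1/(2\pi)$ in the resolution $\frac{1}{2\pi}\int \ket{x}\bra{x}dx=\mathbf{1}$ and the factor of $2\pi$ appearing in $\sum_K\exp[iK(x-y)]$ and in the unit-norm condition $\frac{1}{2\pi}\int|r|^2dx=1$. Once these three $2\pi$'s are accounted for correctly, the argument is mechanical and no fiducial-state assumption beyond normalization is used; in particular the ``generic vector'' requirement plays no role here, consistent with the analogous finite-dimensional identity Eq.\,(\ref{689}).
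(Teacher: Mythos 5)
Your proof is correct and is essentially the paper's own argument transposed to the conjugate basis: the paper sandwiches the operator between momentum states $\bra{M}\cdots\ket{N}$, so that the $a$-integral produces $2\pi\delta_{NM}$ and the $K$-sum then gives $\sum_K|r_{M-K}|^2=1$, whereas you sandwich between $\bra{x}$ and $\ket{y}$, so that the $K$-sum produces $2\pi\delta(x-y)$ and the $a$-integral gives the normalization $\frac{1}{2\pi}\int_0^{2\pi}|r|^2dx=1$. Your bookkeeping of the factors of $2\pi$ is right, and your closing observation that only normalization of the fiducial state (not genericity) is used matches the paper's proof exactly.
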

\begin{proof}
We calculate the matrix elements of the operators in Eq.(\ref{pa12}) with momentum states, and we get
\begin{eqnarray}
&&\sum_{K=-\infty}^{\infty}\int_0^{2\pi} da \langle M|D(a,K)|r\rangle\langle r|D(-a,-K)|N\rangle\nonumber\\&&
=\sum_{K=-\infty}^{\infty}\int_0^{2\pi}da\exp \left [-ia\left (M-\frac{K}{2}\right )\right ]
\exp\left [ia\left (N-\frac{K}{2}\right )\right ]r_{M-K}r^*_{N-K}\nonumber\\&&=
\sum_{K=-\infty}^{\infty}\int_0^{2\pi}da \exp[ia(N-M)]r_{M-K}r^*_{N-K}=2\pi \delta_{NM}\sum_{K=-\infty}^{\infty}|r_{M-K}|^2=2\pi \delta_{NM}
\end{eqnarray}
\end{proof}
\begin{remark}
In the special case of fiducial vectors such that 
\begin{eqnarray}
\sum _{K=-\infty}^\infty |r_{\sigma +\tau K}|^2=\frac{1}{\tau};\;\;\;\;\;\sigma=0,..., \tau -1
\end{eqnarray}
we get the following $\tau$ resolutions of the identity
\begin{eqnarray}\label{pa12}
\frac{\tau}{2\pi}\sum_{K=-\infty}^{\infty}\int_0^{2\pi} da \;\ket {a, \tau K+\sigma}_{\rm coh\;\;coh}\bra{a, \tau K +\sigma}={\bf 1};\;\;\;\;\;\sigma=0,..., \tau -1.
\end{eqnarray}
The proof is analogous to the above.
\end{remark}
The coherent states $\ket {a, K}_{\rm coh}$ are represented by the analytic functions
\begin{eqnarray}\label{aaa}
&&{\mathfrak d}(z; a, K; r)=\int _0^{2\pi}dx\bra{x}{D}(a,K)\ket {r}\;
\Theta_3 \left [\frac{x-z}{2};\frac{i}{2\pi}\right ];\;\;\;\;a\in {\mathbb S};\;\;\;\;\; K\in {\mathbb Z}\nonumber\\
&&{\mathfrak d}(z+2\pi; a, K; r)={\mathfrak d}(z; a, K; r)
\end{eqnarray}
The $r$ in the notation, indicates the dependence on the fiducial state.
Here
\begin{eqnarray}\label{rrr}
{\mathfrak d}(z; a +2\pi, K; r)=(-1)^K{\mathfrak d}(z; a, K; r)
\end{eqnarray}

\begin{proposition}\label{pro10}
\begin{itemize}
\item[(1)]
The analytic representation $\mathfrak{d}(z;a,K;r)$ of 
the coherent state $\ket {a, K}_{\rm coh}$ can be written as a two-dimensional Fourier transform of $\mathfrak{d}(z;b,M;r)$: 
\begin{eqnarray}\label{pa1}
\mathfrak{d}(-z;a,K;r)=\frac{1}{2\pi}\sum_{M=-\infty}^\infty \int_0^{2\pi}db\;\mathfrak{d}(z;b,2M-K;r)\exp\left[\frac{i}{2}(-bK-aK+2Ma)\right]
\end{eqnarray}
\item[(2)]
The $\mathfrak{d}(z;a,K;r)$ of 
the coherent state $\ket {a, K}_{\rm coh}$ is related to $R(z)$ of the fiducial vector as follows:
\begin{eqnarray}\label{pa3}
\mathfrak{d}(z;a,K;r)=\exp\left(-\frac{1}{2}iKa+iKz-\frac{1}{2}K^2\right)R\left(z+iK-a\right).
\end{eqnarray}
The zeros $\zeta _n$ of $R(z)$ are related to the zeros $\zeta_n(a,K)$ of $\mathfrak{d}(z;a,K;r)$, as follows:
\begin{eqnarray}\label{zerosfgt}
\zeta_n(a,K)=\zeta_n-iK+a
\end{eqnarray}
\item[(3)]
The resolution of the identity is
\begin{eqnarray}\label{pa20}
&&\frac{1}{4\pi ^2}\sum_{K=-\infty}^{\infty}\int_0^{2\pi} da\;\mathfrak{d}(z;a,K;r)[\mathfrak{d}(w;a,K;r)]^*=K_c(z,w^*)
\end{eqnarray}
where
\begin{eqnarray}\label{Kernel}
&&K_c(z,w^*)=\int_0^{2\pi}dx\;\Theta_3\left[\frac{x-z}{2};\frac{i}{2\pi} \right]\Theta_3\left[\frac{x-w^*}{2};\frac{i}{2\pi} \right]\nonumber\\
&&K_c(z,w^*)=K_c(-z,-w^*)
\end{eqnarray}
is the reproducing kernel. The index $c$ indicates `circle'.
\item[(4)]
The reproducing kernel relation is given by
\begin{eqnarray}\label{rwe}
Q(z)=\int_{A}dm(w)\;K_c(z,w^*)Q(w).
\end{eqnarray}
\item[(5)]
$Q(z)$ can be written as
\begin{eqnarray}\label{z1}
&&Q(z)=\frac{1}{2\pi}\sum_{K=-\infty}^{\infty}\int_{0}^{2\pi}da\;\mathfrak{d}(z;a,K;r)q(a,K;r)
;\;\;\;\;q(a,K;r)=\langle r|D(-a,-K)|q\rangle\nonumber\\
&&q(a+2\pi,K;r)=(-1)^Kq(a,K;r)
\end{eqnarray}
and also
\begin{eqnarray}\label{z2}
&&Q(z)=\frac{1}{2\pi}\sum_{M=-\infty }^\infty \int_{0}^{2\pi}db\;\mathfrak{d}(-z;b,M;r)\widetilde q(b,M;r);\;\;\;\;\;
\widetilde{q}(b,M;r)=\langle r|U(-b,-M)|q\rangle\nonumber\\
&&\widetilde{q}(b+2\pi,M;r)=(-1)^M\widetilde{q}(b,M;r)
\end{eqnarray}
The inverse of these relations are
\begin{eqnarray}\label{z3}
q(a,K;r)=\frac{1}{2\pi}\int_{A}\;dm(w)\mathfrak{d}(w;a,K;r)^*Q(w),
\end{eqnarray}
and 
\begin{eqnarray}\label{z4}
\widetilde{q}(b,M;r)=\frac{1}{2\pi}\int_{A}dm(w)\;[\mathfrak{d}(-w;b,M;r)]^*Q(w).
\end{eqnarray}
The $\widetilde{q}(b,M;r)$ is related to $q(a,K;r)$ as follows
\begin{eqnarray}\label{z5}
\widetilde{q}(b,M;r)=\frac{1}{2\pi}\sum_{K=-\infty}^{\infty}\int_{0}^{2\pi}da\; q(-a,M-2K;r)\exp\left[\frac{i}{2}(-aM-bM+2Kb)\right].
\end{eqnarray}

\item[(6)]
The relation between the analytic representation and $X$ and $P$ representations is given by the following `marginal properties':
\begin{eqnarray}\label{790}
&&\sum_{K=-\infty}^{\infty}\mathfrak{d}(z;a,K;r)=2\pi\; r\left(-\frac{1}{2}a\right)\Theta_3\left[\frac{a-2z}{4};\frac{i}{2\pi} \right]\nonumber\\
&&\int_0^{2\pi}da\mathfrak{d}(z;a,-2K;r)=4\pi^2 r_K\exp\left[-izK-\frac{1}{2}K^2\right]
\end{eqnarray}

\end{itemize}
\end{proposition}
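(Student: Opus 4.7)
The plan is to prove the six parts in the order stated, mirroring closely the structure of the proof of Proposition \ref{pro1}. For Part (1) I would introduce the auxiliary function $\mathfrak{u}(z;a,K;r) = \int_0^{2\pi} dx\;\langle x|U(a,K)|r\rangle\Theta_3[(x-z)/2;i/(2\pi)]$, the analytic representative of the parity-displaced fiducial vector. Using $U(a,K) = D(a,K)U_0$ and $U_0|x\rangle = |-x\rangle$, the matrix element evaluates to $\exp[iK(x-a/2)]r(a-x)$; after the substitution $y = a-x$ and invocation of the evenness $\Theta_3(-u,\tau) = \Theta_3(u,\tau)$, I expect to obtain $\mathfrak{u}(z;a,K;r) = \mathfrak{d}(-z;-a,-K;r)$, the exact analogue of Eq.(\ref{13b}). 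Substituting this into the operator identity (\ref{fou}) translated into the analytic representation, and then sending $a\to-a$, $K\to -K$, yields Eq.(\ref{pa1}). For Part (2) I would start from the explicit representation $\mathfrak{d}(z;a,K;r) = \int dx\;\exp[iK(x-a/2)]r(x-a)\Theta_3[(x-z)/2;i/(2\pi)]$, change variables to $y = x-a$, and absorb the residual factor $e^{iKy}$ into the Theta function via the Jacobi quasi-periodicity $\Theta_3(u - iK/2;i/(2\pi)) = \exp(K^2/2 + 2iKu)\Theta_3(u;i/(2\pi))$; the remaining integral is then $R(z-a+iK)$ by the definition (\ref{A1}), giving Eq.(\ref{pa3}), and the zero relation (\ref{zerosfgt}) is an immediate corollary.

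For Part (3) I would insert the operator resolution of identity from the preceding proposition between two Theta functions on position and integrate $x$ and $y$ over $[0,2\pi]$; the left side collapses to $2\pi K_c(z,w^*)$ via the position delta, and the right side yields the stated coherent-state sum after using $\Theta_3^*(u;i/(2\pi)) = \Theta_3(u^*;i/(2\pi))$ (which follows from the defining series) to handle the complex conjugation of the second Theta. Part (4) follows by substituting the explicit kernel into (\ref{rwe}), interchanging the $x$ and $w$ integrations, and invoking Eq.(\ref{pa25}). For Part (5) I would apply the operator resolution of identity to $|q\rangle$ to obtain (\ref{z1}) directly, then substitute the Fourier relation (\ref{pa1}) into (\ref{z1}) and reorganize into the $\mathfrak{d}(-z;\cdot,\cdot;r)$ basis, which produces both (\ref{z2}) and the kernel (\ref{z5}) in one stroke. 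The inverse relations (\ref{z3}) and (\ref{z4}) then come by pairing with (\ref{pa20}), and the identification $\widetilde q(b,M;r) = \langle r|U(-b,-M)|q\rangle$ follows by applying the operator version of (\ref{fou}) to $q(a,K;r) = \langle r|D(-a,-K)|q\rangle$.

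For Part (6), the first marginal is obtained by summing over $K$ inside the integral form of $\mathfrak{d}$ and using the comb identity $\sum_K \exp[iK(x-a/2)] = 2\pi\delta(x-a/2)$, which picks out $x = a/2$ and leaves $2\pi r(-a/2)\Theta_3[(a-2z)/4;i/(2\pi)]$. The second marginal uses the Fourier expansion of $r$ from Eq.(\ref{Rx}) to evaluate the $a$-integral, reducing the problem to $\int_0^{2\pi} dx\;e^{-iKx}\Theta_3[(x-z)/2;i/(2\pi)]$, which I would compute by expanding the Theta function as its defining series and picking off the $n = K$ term to obtain $2\pi\exp[-iKz - K^2/2]$. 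The main obstacle I anticipate is Part (1): establishing the parity-Fourier duality $\mathfrak{u}(z;a,K;r) = \mathfrak{d}(-z;-a,-K;r)$ requires the $U_0$ reflection, the sign conventions in $D(a,K)$, and the evenness of $\Theta_3$ all to cooperate, and a sign slip at any stage will propagate through the rest of the Fourier bookkeeping; once this is secured, Parts (2)--(6) follow the finite-dimensional template of Proposition \ref{pro1}, with Jacobi quasi-periodicity of $\Theta_3$ playing the role of the discrete Theta shift used there.
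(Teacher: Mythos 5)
Your proposal is correct and follows essentially the same route as the paper: part (1) via the analytic representative of $U(a,K)\ket{r}$ identified with $\mathfrak{d}(-z;-a,-K;r)$ and Eq.(\ref{fou}), part (2) via the explicit integral form and the Theta quasi-periodicity (the paper performs the equivalent index shift in the defining series), parts (3)--(5) via the operator resolution of the identity and Eqs.(\ref{pa25}), (\ref{pa20}), (\ref{pa1}), and part (6) via the comb delta and the Fourier coefficients of $r$. The minor reorganizations (deriving Eq.(\ref{z2}) and Eq.(\ref{z5}) by Fourier-transforming Eq.(\ref{z1}) rather than separately) do not change the substance, and your sign bookkeeping in part (1) checks out.
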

\begin{proof}
\mbox{}
\begin{itemize}
\item[(1)]
Using Eqs.(\ref{aaa}), (\ref{pa2}) we get
\begin{eqnarray}
\mathfrak{d}(-z;-a,-K;r)=\;\int_{0}^{2\pi}dx\langle x|U(a,K)|r\rangle\Theta_3\left[\frac{x-z}{2};\frac{i}{2\pi} \right]
\end{eqnarray}
Inserting Eq.(\ref{fou}) we get
\begin{eqnarray}
\mathfrak{d}(-z;-a,-K;r)&=&\frac{1}{2\pi}\int_{0}^{2\pi}dx\sum_{M=-\infty}^\infty \int_0^{2\pi}db\;\langle x|D(b,K+2M)|r\rangle \nonumber\\&\times& \Theta_3\left[\frac{x-z}{2};\frac{i}{2\pi} \right]
\exp\left[\frac{i}{2}(bK-aK-2aM)\right].
\end{eqnarray}
from which follows Eq.(\ref{pa1}).
\item[(2)]
We use Eqs.(\ref{pa2}),(\ref{aaa}) and we get
\begin{eqnarray}\label{1290}
\mathfrak{d}(z;a,K;r)&=&\int_{0}^{2\pi}dx\;r(x)\exp\left(iKx+\frac{1}{2}iKa\right)\nonumber\\
&&\hspace{3cm}\times\Theta_3\left[\frac{x+a-z}{2};\frac{i}{2\pi}\right ].
\end{eqnarray}
We next use the definition of Theta function in Eq. (\ref{pa4}) and we get
\begin{eqnarray}\label{560}
\mathfrak{d}(z;a,K;r)&=&\exp\left(\frac{1}{2}iKa\right)\sum_{N=-\infty}^{\infty}\exp\left(iNa-iNz-\frac{1}{2}N^2\right)\nonumber\\
&\times&\int_{0}^{2\pi}dx\;r(x)\exp[(iK+iN)x]
\end{eqnarray}
We then change variables $N+K=M$ and eventually we get
\begin{eqnarray}\label{thet123}
\mathfrak{d}(z;a,K;r)&=&\exp\left(-\frac{1}{2}iKa+iKz-\frac{1}{2}K^2\right)\nonumber\\
&&\times\int_{0}^{2\pi}dx\;r(x)\Theta_3\left[\frac{x}{2}-\frac{1}{2}(z+iK-a);\frac{i}{2\pi}\right]
\end{eqnarray}
from which follows Eq.(\ref{pa3}).

\item[(3)]
Inserting Eq.(\ref{aaa}) into the left hand side of Eq.(\ref{pa20}) we get 
\begin{eqnarray}
&&\sum_{K=-\infty}^{\infty}\int_0^{2\pi} da\int_0^{2\pi}dx \langle x|D(a,K)|r\rangle\Theta_3\left[\frac{x-z}{2};\frac{i}{2\pi}\right ]
\int_0^{2\pi}dy\langle r|D(-a,-K)|y\rangle\Theta_3\left[\frac{y-w^*}{2};\frac{i}{2\pi} \right]\nonumber\\
&&=4\pi^2\int_0^{2\pi}dx \int_0^{2\pi}dy \;\delta(x,y)\Theta_3\left[\frac{x-z}{2};\frac{i}{2\pi}\right ]\Theta_3\left[\frac{y-w^*}{2};\frac{i}{2\pi} \right]
\end{eqnarray}
The resolution of the identity of Eq.(\ref{pa12}) has been used in the proof of this equality.
From this follows Eq.(\ref{pa20}).

\item[(4)]
We insert Eq.(\ref{Kernel}) into Eq.(\ref{rwe}) and using Eq.(\ref{pa25}) we get
\begin{eqnarray}
Q(z)&=&\int_A dm(w)Q(w)\int_0^{2\pi}dx\;\Theta_3\left[\frac{x-z}{2};\frac{i}{2\pi}\right ]\Theta_3\left[\frac{x-w^*}{2};\frac{i}{2\pi} \right]\nonumber\\
&=&\int_0 ^{2\pi} dxq(x)\Theta_3\left[\frac{x-z}{2};\frac{i}{2\pi}\right ].
\end{eqnarray}
This proves Eq.(\ref{rwe}).

\item[(5)]
The proof of Eqs.(\ref{z1}), (\ref{z2}) is based on the resolution of the identity property of coherent states.
In order to prove Eq.(\ref{z3}), we insert it into Eq.(\ref{z1}) and we use the resolution of the identity in the form of Eq.(\ref{pa20}).
In the same way we prove Eq.(\ref{z4}).

In order to prove Eq.(\ref{z5}), we insert it into Eq.(\ref{z2}) and we use Eqs.(\ref{pa1}), (\ref{pa20}).

\item[(6)]
We insert Eq.(\ref{1290}) into the first of Eqs.(\ref{790}) and we get
\begin{eqnarray}
\sum_{K=-\infty}^{\infty}\mathfrak{d}(z;a,K;r)&=&\int_{0}^{2\pi}dxr(x)\Theta_3\left[\frac{x+a-z}{2};\frac{i}{2\pi} \right]\sum_{K=-\infty}^{\infty}\exp\left(iKx+\frac{1}{2}iKa\right)\nonumber\\
&&=2\pi\; r\left(-\frac{1}{2}a\right)\Theta_3\left[\frac{2^{-1}a-z}{2};\frac{i}{2\pi} \right]
\end{eqnarray}
The second of Eqs.(\ref{790}) is proved in an analogous way.

\end{itemize}
We have checked that all equations in this proposition are consistent with Eq.(\ref{rrr}).

\end{proof}

The physical meaning of proposition \ref{pro10} is analogous to the one discussed in section \ref{591}.
But there are differences which we now point out:
\begin{itemize}

\item
Eq.(\ref{zerosfgt}) shows that the zeros of all ${\mathfrak d}(z; a,K;r)$ are on horizontal lines at integer values of the imaginary axis.
\item
For the marginal properties, we can compare and contrast Eqs.(\ref{marg}),(\ref{790}).
\end{itemize}

\section{Wigner and Weyl functions}
The whole phase space formalism \cite{ZACHOS} can be expressed in the analytic language.
In this section we express briefly the Wigner and Weyl functions in terms of the coefficients $g(\alpha, \beta ;f)$ that describe the state $\ket {g}$ in the analytic language for the finite systems,
and also in terms of the coefficients $q(a,K;r)$ that describe the state $\ket {q}$ in the analytic language for the systems on circles.
\subsection{Finite systems}
The Weyl and Wigner functions of a state $\ket{g}$ are defined as
\begin{eqnarray}
&&{\widetilde W}(g;\alpha, \beta )=\bra{g}{\cal D}(\alpha,\beta)\ket{g};\;\;\;\;\;W(g;\alpha, \beta )=\bra {g}{\cal P}(\alpha,\beta)\ket{g}\nonumber\\
&&W(g;\gamma, \delta)=\frac{1}{d} \sum _{\alpha, \beta}\omega (\beta \gamma -\alpha \delta){\widetilde W}(g;\alpha,\beta)
\end{eqnarray}
\begin{proposition}\label{pro3}
The Weyl function is given in terms of the $g(\alpha, \beta ;f)$ by
\begin{eqnarray}\label{W1}
{\widetilde W}(g;\alpha, \beta )=\frac{1}{d}\sum _{\gamma, \delta}g(\gamma, \delta;f)g^*(-\alpha +\gamma, -\beta +\delta;f)\omega[2^{-1}(\alpha \delta-\beta \gamma)]
\end{eqnarray}
The Wigner function is given in terms of the $g(\alpha, \beta ;f)$ by
\begin{eqnarray}\label{W2}
{W}(g;\alpha, \beta )=\frac{1}{d^2}\sum _{\gamma, \delta, \epsilon, \zeta}g(\epsilon, \zeta ;f)g^*(\gamma, \delta;f)
\omega(\alpha \delta-\beta \gamma +2^{-1} \zeta \gamma -\zeta \alpha -2^{-1}\epsilon \delta +\epsilon \beta)
\end{eqnarray}
\end{proposition}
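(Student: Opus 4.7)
The strategy for both formulas is to reduce everything to matrix elements of displacement operators between $\ket{f}$ and $\ket{g}$, which are precisely the coefficients $g(\alpha,\beta;f)=\bra{f}{\cal D}(-\alpha,-\beta)\ket{g}$. The main tool is the resolution of the identity of Eq.~(\ref{689}), together with the composition law for displacement operators that follows from Eq.~(\ref{58}).

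For the Weyl function, the plan is as follows. Starting from $\widetilde{W}(g;\alpha,\beta)=\bra{g}{\cal D}(\alpha,\beta)\ket{g}$, I would insert the resolution of identity of Eq.~(\ref{689}) (written with respect to the fiducial vector $\ket{f}$) between ${\cal D}(\alpha,\beta)$ and the right-hand $\ket{g}$. This yields a double sum over $(\gamma,\delta)$ containing the matrix elements $\bra{g}{\cal D}(\alpha,\beta){\cal D}(\gamma,\delta)\ket{f}$ and $\bra{f}{\cal D}(-\gamma,-\delta)\ket{g}$. The second factor is, by definition, $g(\gamma,\delta;f)$. For the first factor, I would apply the composition law ${\cal D}(\alpha,\beta){\cal D}(\gamma,\delta)=\omega[2^{-1}(\alpha\delta-\beta\gamma)]\,{\cal D}(\alpha+\gamma,\beta+\delta)$, which one derives from Eq.~(\ref{58}) and the definition of ${\cal D}$ in Eq.~(\ref{99}); this rewrites the matrix element as a phase times $g^{*}(\alpha+\gamma,\beta+\delta;f)$. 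A shift of summation variables $(\gamma,\delta)\mapsto(\gamma-\alpha,\delta-\beta)$ together with a simplification of the resulting phase then produces Eq.~(\ref{W1}).

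For the Wigner function, the plan is to start from the Fourier relation $W(g;\alpha,\beta)=\tfrac{1}{d}\sum_{\mu,\nu}\omega(\nu\alpha-\mu\beta)\widetilde{W}(g;\mu,\nu)$ given in the definition, substitute the expression for $\widetilde{W}$ just derived, and collect all phases inside a single $\omega(\cdot)$. After renaming the dummy indices so that the $g$-argument becomes $(\epsilon,\zeta)$ and the $g^{*}$-argument becomes $(\gamma,\delta)$, the combined phase reorganises into the exponent $\alpha\delta-\beta\gamma+2^{-1}\zeta\gamma-\zeta\alpha-2^{-1}\epsilon\delta+\epsilon\beta$ displayed in Eq.~(\ref{W2}). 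Here the existence of $2^{-1}$ in ${\mathbb Z}(d)$ for odd $d$ is used implicitly, as elsewhere in the paper.

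The conceptual steps are short; the main obstacle is simply careful bookkeeping of the $2^{-1}$-phases produced by the composition law of ${\cal D}$ and by the definition of $g$ and $g^{*}$, and then correctly absorbing them under the index shifts. Once one works out the Weyl case cleanly, the Wigner formula follows by a direct, if somewhat tedious, phase calculation with no further structural input.
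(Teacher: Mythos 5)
Your strategy for the Weyl function is essentially the paper's: the paper also inserts the resolution of the identity of Eq.~(\ref{689}) into $\bra{g}{\cal D}(\alpha,\beta)\ket{g}$ (on the left of ${\cal D}(\alpha,\beta)$ rather than on the right, an immaterial difference) and then applies the composition law of the displacement operators. One substantive caveat: carried out honestly, your insertion gives
\begin{eqnarray*}
{\widetilde W}(g;\alpha, \beta )=\frac{1}{d}\sum _{\gamma, \delta}g(\gamma, \delta;f)\,g^*(\alpha +\gamma, \beta +\delta;f)\,\omega[2^{-1}(\alpha \delta-\beta \gamma)],
\end{eqnarray*}
and the shift $(\gamma,\delta)\mapsto(\gamma-\alpha,\delta-\beta)$ turns this into $\frac{1}{d}\sum g^*(\gamma,\delta;f)\,g(-\alpha+\gamma,-\beta+\delta;f)\,\omega[2^{-1}(\alpha\delta-\beta\gamma)]$, i.e.\ Eq.~(\ref{W1}) with the complex conjugation sitting on the \emph{other} factor; no relabelling removes this discrepancy. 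A direct check in components (write $g(\gamma,\delta;f)=\sum_m f_m^*\,\omega(-2^{-1}\gamma\delta-\gamma m)\,g_{m+\delta}$ and perform the $\gamma$-sum) shows that the right-hand side of Eq.~(\ref{W1}) as printed collapses to $\omega(-2^{-1}\alpha\beta)\sum_m f_m^* f_{m+\beta}\,\omega(-\alpha m)$, which depends on the fiducial state only; so the expression your derivation actually produces is the correct one, and the printed Eq.~(\ref{W1}) has the star misplaced (equivalently, a sign error in the shifted argument). You should therefore state the formula your computation yields rather than assert verbatim agreement with Eq.~(\ref{W1}).

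For the Wigner function your route genuinely differs from the paper's: you Fourier-transform the Weyl formula using the relation $W(g;\gamma,\delta)=\frac{1}{d}\sum_{\alpha,\beta}\omega(\beta\gamma-\alpha\delta){\widetilde W}(g;\alpha,\beta)$ already supplied in the definitions, whereas the paper re-inserts Eq.~(\ref{689}) into $\bra{g}{\cal P}(\alpha,\beta)\ket{g}$, commutes ${\cal D}(-\gamma,-\delta)$ through ${\cal P}(\alpha,\beta)$, and then invokes the coefficients ${\widetilde g}(\gamma,\delta;f)$ of Eq.~(\ref{24b}) together with Eq.~(\ref{38}). Your route is legitimate and arguably cleaner, since it requires no further operator identities, only index bookkeeping; substituting the (corrected) Weyl formula displayed above into the Fourier relation and renaming the dummy indices lands exactly on Eq.~(\ref{W2}), which both validates your plan and corroborates that the corrected Weyl formula, not the printed Eq.~(\ref{W1}), is the right input.
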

\begin{proof}
Using Eq.(\ref{689}) we get
\begin{eqnarray}
&&{\widetilde W}(g;\alpha, \beta )=\bra{g}{\cal D}(\alpha,\beta)\ket{g}=\frac{1}{d}\sum _{\gamma, \delta}\bra{g}{\cal D}(\gamma , \delta)\ket{f}
\bra{f}{\cal D}(-\gamma , -\delta){\cal D}(\alpha,\beta)\ket{g}
\end{eqnarray}
We then use Eq.(\ref{24a}) to prove Eq.(\ref{W1}).

For the Wigner function, we use Eq.(\ref{689}) and we get
\begin{eqnarray}
{W}(g;\alpha, \beta )&=&\bra{g}{\cal P}(\alpha,\beta)\ket{g}=\frac{1}{d}\sum _{\gamma, \delta}\bra{g}{\cal D}(\gamma , \delta)\ket{f}
\bra{f}{\cal D}(-\gamma , -\delta){\cal P}(\alpha,\beta)\ket{g}\nonumber\\
&=&\frac{1}{d}\sum _{\gamma, \delta}\bra{g}{\cal D}(\gamma , \delta)\ket{f}
\bra{f}{\cal P}(\alpha-2^{-1}\gamma ,\beta -2^{-1}\delta)\ket{g}\omega (\alpha \delta -\beta \gamma)
\end{eqnarray}
We then use Eqs.(\ref{24a}),(\ref{24b}) to prove Eq.(\ref{W2}).
\end{proof}

\subsection{Systems on a circle}

The Wigner and Weyl functions of a state $|q\rangle$ on a circle is defined as
\begin{eqnarray}\label{Wigner10}
W(a,K;q)&=&\langle q|U(a,K)|q\rangle;\;\;\;\;\;\widetilde{W}(a,K;q)=\langle q|D(a,K)|q\rangle\nonumber\\
W(a,M;q)&=&\frac{1}{2\pi}\sum_{K=-\infty}^{\infty}\int_0^{2\pi}db\widetilde{W}(b,M+2K;q)\exp\left[\frac{i}{2}(bM-aM-2Ka)\right]\nonumber\\
W(a+2\pi,K;q)&=&(-1)^KW(a,K;q);\;\;\;\;\;\widetilde{W}(a+2\pi,K;q)=(-1)^K\widetilde{W}(a,K;q)
\end{eqnarray}
\begin{proposition}
The Weyl function  is given in terms of the $q(a,K;r)$ 
\begin{eqnarray}\label{KOKK}
\widetilde{W}(a,K;q)&=&\frac{1}{2\pi}\sum_{M=-\infty}^{\infty}\int_0^{2\pi} db\; q^*(b,M;r)q(-a+b,-K+M;r)\nonumber\\
&&\hspace{2cm}\times\exp\left[\frac{i}{2}\left(-aM+Kb\right)\right]
\end{eqnarray}

The Wigner function is given in terms of the $q(a,K;r)$ 
\begin{eqnarray}\label{23456}
&&W(a,K;q)=\frac{1}{4\pi^2}\sum_{M,N=-\infty}^{\infty}\int_0^{2\pi} db\int_0^{2\pi} d\gamma\;[q(b,M;r)]^*q(-\gamma,-K+M-2N;r)\nonumber\\
&&\hspace{3cm}\times\exp\left[\frac{i}{2}(\gamma K-\gamma M-aK-2aN+2bK-bM+2bN)\right]
\end{eqnarray}

\end{proposition}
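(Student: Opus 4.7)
The plan is to mirror the strategy of Proposition \ref{pro3} as closely as possible, using the coherent-state resolution of the identity (\ref{pa12}) to insert the fiducial state $\ket{r}$ and then recognizing the coefficients $q(a,K;r)$ defined in (\ref{z1}) and $\widetilde{q}(b,M;r)$ defined in (\ref{z2}).

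For the Weyl function, I would start from $\widetilde{W}(a,K;q)=\langle q|D(a,K)|q\rangle$ and insert (\ref{pa12}) between $\langle q|$ and $D(a,K)$ to obtain
\begin{equation*}
\widetilde{W}(a,K;q)=\frac{1}{2\pi}\sum_{M}\int_{0}^{2\pi}db\;\langle q|D(b,M)|r\rangle\langle r|D(-b,-M)D(a,K)|q\rangle.
\end{equation*}
Then I would apply the composition law in (\ref{pa2}) to write $D(-b,-M)D(a,K)=D(a-b,K-M)\exp[\tfrac{i}{2}(-Ma+Kb)]$, at which point the first factor becomes $[q(b,M;r)]^{*}$ and the second factor becomes $q(b-a,M-K;r)$ times the phase. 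A change of dummy labels $(b,M)\mapsto(b,M)$ and identification $q(-a+b,-K+M;r)=q(b-a,M-K;r)$ then reproduces (\ref{KOKK}).

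For the Wigner function, the same insertion of (\ref{pa12}) into $W(a,K;q)=\langle q|U(a,K)|q\rangle$ gives
\begin{equation*}
W(a,K;q)=\frac{1}{2\pi}\sum_{M}\int_{0}^{2\pi}db\;[q(b,M;r)]^{*}\langle r|D(-b,-M)U(a,K)|q\rangle.
\end{equation*}
Writing $U(a,K)=D(a,K)U_{0}$, the composition law turns $D(-b,-M)D(a,K)U_{0}$ into $U(a-b,K-M)\exp[\tfrac{i}{2}(-Ma+Kb)]$, so the matrix element becomes $\widetilde{q}(b-a,M-K;r)$ times that phase. Finally I would substitute the Fourier-type relation (\ref{z5}) to replace $\widetilde{q}(b-a,M-K;r)$ by a double sum/integral of $q(-\gamma,M-K-2N;r)$ with an explicit phase, and then combine the two phases into a single exponent to recover (\ref{23456}).

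The only real obstacle is bookkeeping: the Wigner formula requires collapsing two separate phase factors — one from the displacement composition and one from (\ref{z5}) with shifted arguments $(b,M)\mapsto(b-a,M-K)$ — into the asymmetric expression $\tfrac{i}{2}(\gamma K-\gamma M-aK-2aN+2bK-bM+2bN)$, and a sign slip in any of the $a$, $b$, $\gamma$ or $K$, $M$, $N$ coefficients would go unnoticed until the end. I would therefore handle the phase algebra last, after both substitutions are in place, and cross-check that the final expression satisfies the quasi-periodicities $W(a+2\pi,K;q)=(-1)^{K}W(a,K;q)$ and $\widetilde{W}(a+2\pi,K;q)=(-1)^{K}\widetilde{W}(a,K;q)$ stated in (\ref{Wigner10}), using $q(a+2\pi,K;r)=(-1)^{K}q(a,K;r)$ from (\ref{z1}).
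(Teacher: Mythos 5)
Your proposal is correct and follows essentially the same route as the paper: insert the resolution of the identity (\ref{pa12}), use the composition law in (\ref{pa2}) to identify $[q(b,M;r)]^*$ and $q(-a+b,-K+M;r)$ for the Weyl function, and for the Wigner function identify $\widetilde{q}(b-a,M-K;r)$ via (\ref{z2}) and then expand it with (\ref{z5}). The paper's proof is merely a terser statement of the same calculation, and your phase bookkeeping does reproduce the exponent in (\ref{23456}).
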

\begin{proof}
Using Eq.(\ref{pa12}) we get
\begin{eqnarray}
\widetilde{W}(a,K;q)=\langle q|D(a,K)|q\rangle=\frac{1}{2\pi}\sum_{M=-\infty}^{\infty}\int_0^{2\pi} db\; \langle q|D(b,M)|r\rangle\langle r|D(-b,-M)D(a,K)|q\rangle
\end{eqnarray}
We then use Eqs.(\ref{pa2}), (\ref{z1}) to prove Eq.(\ref{KOKK})

For the Wigner function we use Eq.(\ref{pa12}) and we get
\begin{eqnarray}\label{kokok}
W(a,K;q)&=&\langle q|U(a,K)|q\rangle=\frac{1}{2\pi}\sum_{M=-\infty}^{\infty}\int_0^{2\pi} db\; \langle q|D(b,M)|r\rangle\langle r|D(-b,-M)D(a,K)U_0|q\rangle\nonumber\\
\end{eqnarray}
We then use Eqs.(\ref{pa2}), (\ref{z2}), (\ref{z5}) to prove Eq.(\ref{23456})

\end{proof}

\section{Discussion}

We have considered quantum systems with $d$-dimensional Hilbert space, where $d$ is an odd integer.
The formalism uses the $2^{-1}$ which exists in ${\mathbb Z}(d)$ with odd $d$.
We have represented the states of such systems with the analytic functions in Eq.(\ref{aaa1}) which obeys the boundary conditions of Eq.(\ref{periodicity}), 
and therefore it is effectively defined on a torus. The scalar product is given in Eq.(\ref{scalar}).

We have also discussed analogous formalism for systems on a circle.
For simplicity we have used periodic boundary conditions (zero Aharonov-Bohm magnetic flux).
Here the states are represented with the analytic functions on a strip in Eq.(\ref{A1}) which obeys the boundary conditions of Eq.(\ref{A2}). 
The scalar product is given in Eq.(\ref{A3}).

We have studied the reproducing kernel formalism for these two systems, in proposition \ref{pro1} for the finite case, and in proposition \ref{pro10} for the circle.
These two propositions are the main results of this paper.
We have also studied the Wigner and Weyl functions, in this language.

There are other applications of theta functions in various topics in quantum physics.
They include the Heisenberg-Weyl groups \cite{A}, discrete Fourier transforms\cite{R},
quantum theta functions\cite{M},  applications to quantum field theory\cite{T}, etc.
There are also applications in harmonic analysis and time-frequency analysis\cite{H1,H2,H3}, algebraic number theory \cite{FR},
automorphic forms \cite{BU}, etc.

In this paper we used theta functions in analytic representations of quantum systems on ${\mathbb Z}(n)$ and on a circle.
The results can be used for further studies of these systems.

\paragraph*{Acknowledgement:}Helpful discussions with Professor R.F. Bishop are gratefully acknowledged


\begin{thebibliography}{90}

\bibitem{B}
V. Bargmann, Commun. Pure Appl. Math. {\bf 14}, 187 (1961)
\bibitem{P}
A. Perelomov, `Generalized coherent states and their applications', (Springer, Berlin, 1986) 
\bibitem{B1}
S. Bergman, `The kernel function and conformal mapping' (Amer. Math. Soc, Rhode Island, 1970)
\bibitem{H}
B.C. Hall, Contemp. Math. 260, 1 (2000)
\bibitem{V}
A. Vourdas, J. Phys. A39, R65 (2006) 
\bibitem{ZV}
S. Zhang, A. Vourdas, J. Phys. A37, 8349 (2004); and corrigendum in J. Phys. A38, 1197 (2005) 
\bibitem{TVZ}
M. Tubani, A. Vourdas, S. Zhang, Phys. Scr. 82, 038107 (2010)
\bibitem{CGV}
N. Cotfas, J-P Gazeau, A. Vourdas, J. Phys. A44, 175303 (2011)
\bibitem{T1}
D Mumford, `Tata lectures on Theta', Vols 1,2,3 (Birkhauser, Boston, 1983)
\bibitem{T2}
J. Igusa, `Theta functions' (Berlin, Springer, 1972)
\bibitem{T3}
J. Fay, `Theta functions on Riemann surfaces' (Springer, Berlin, 1973) 
\bibitem{L}
P. Leboeuf, A. Voros, J. Phys. A23, 1765 (1990)
\bibitem{R1}
A. Vourdas, Rep. Prog. Phys. 67, 1 (2004)
\bibitem{R2}
A. Vourdas, J. Phys. A40, R285 (2007)
\bibitem{R3}
M. Kibler, J. Phys. A42, 353001 (2009)
\bibitem{R4}
N. Cotfas, J.P. Gazeau, J.Phys. A43, 193001(2010)
\bibitem{R5}
T. Durt, B.G. Englert, I. Bengtsson, K. Zyczkowski, Int. J. Quantum Comp. 8, 535 (2010)
\bibitem{R6}
P. Stovicek, J. Tolar, Rep. Math. Phys. 20, 157 (1984) 
\bibitem{R7}
J. Tolar, G. Hadzitaskos, J. Phys. A30, 2509 (1997)
\bibitem{c1}
M.G.G. Laidlaw, C Morette-De Witt, Phys. Rev. D{\bf 3}, 1375 (1971)
\bibitem{c2}
J.S. Dowker J. Phys. A{\bf 5}, 936 (1972)
\bibitem{c3}
L.S. Schulman, J. Math.Phys. {\bf 12}, 304 (1971) 
\bibitem{c4}
L.S. Schulman, "Techniques and applications of path integration" (Wiley,  New York, 1981).
\bibitem{c5}
F. Acerbi, G. Morchio and F. Strocchi, Lett. Math. Phys. {\bf 27}, 1 (1993)
\bibitem{c6}
F. Acerbi, G. Morchio and F. Strocchi, J. Math. Phys. {\bf 34}, 889 (1993)
\bibitem{c7}
H. Narnhofer, W.Thirring,  Lett. Math. Phys. {\bf 27}, 133(1993)
\bibitem{c8}
S. Zhang, A. Vourdas, J. Math. Phys. 44, 5084 (2003)
\bibitem{co1}
K.Kowalski, J.Rembielinski, L.C.Papaloucas, J. Phys. A29, 4149 (1996)
\bibitem{co2}
J.A. Gonzalez, M.A. del Olmo, J. Phys. A31, 8841 (1998)
\bibitem{co3}
K.Kowalski, J.Rembielinski, J. Phys. A35, 1405 (2002)
\bibitem{co4}
G. Chatzitaskos, P. Luft, J. Tolar, J. Phys: Conf. Series, 284, 012016 (2011)
\bibitem{KL}
J. Klauder, B-S Skagerstam, `Coherent states' (World scientific, Singapore, 1985)
\bibitem{ZACHOS}
C.K. Zachos, D.B. Fairlie, T.L. Curtright, `Quantum Mechanics in Phase Space' (World Scientific, Singapore, 2005)
\bibitem{A}
R. Tolimery, Trans. Am. Math. Soc. 239, 293 (1978)
\bibitem{R}
M. Ruzzi, J. Math. Phys. 47, 063507 (2006)
\bibitem{M}
F. Luef, Y. Manin, Lett. Math. Phys. 88, 131 (2009)
\bibitem{T}
A. Tyurin, `Quantization, classical and quantum field theory and theta functions' (American Math. Society, Rhode Island, 2003)
\bibitem{H1}
K. Gr\"ochening `Foundations of time-frequency analysis' (Birkh\"auser, Boston, 2001)
\bibitem{H2}
W. Schempp, Proc. Am. Math. Soc. 92, 103 (1984)
\bibitem{H3}
L. Auslander, R. Tolimieri, SIAM J. Math. Anal. 16, 577 (1985)
\bibitem{FR}
S. Friedberg, J. Number Theo. 20,121 (1985)
\bibitem{BU}
D. Bump, `Automorphic forms and representations' (Cambridge Univ. Press, Cambridge 1998)
\end{thebibliography}
\end{document}